\numberwithin{equation}{section}
\newtheorem{theorem}{Theorem}[section]   
\newtheorem{definition}[theorem]{Definition}
\newtheorem{proposition}[theorem]{Proposition}
\newtheorem{lemma}[theorem]{Lemma}
\newtheorem{corollary}[theorem]{Corollary}
\newtheorem{example-notation}[theorem]{Example-Notation}
\newtheorem{remark}[theorem]{Remark}
\def\d{\partial}
\def\f{\frac}
\def\inw{\in\{1,\dots,n\}}
\newcommand{\eqa}{\begin{eqnarray}}
\newcommand{\eeqa}{\end{eqnarray}}
\newcommand{\beq}{\begin{equation}}
\newcommand{\eeq}{\end{equation}}
\begin{document}
\title[Generalised hodograph method for non-diagonalisable systems]{The generalised hodograph method for non-diagonalisable integrable systems of hydrodynamic type}
\author{Paolo Lorenzoni}
\address{P.~Lorenzoni:\newline Dipartimento di Matematica e Applicazioni, Universit\`a di Milano-Bicocca, \newline
Via Roberto Cozzi 55, I-20125 Milano, Italy and INFN sezione di Milano-Bicocca}
\email{paolo.lorenzoni@unimib.it}
\author{Sara Perletti}
\address{S.~Perletti:\newline Dipartimento di Matematica e Applicazioni, Universit\`a di Milano-Bicocca, \newline
Via Roberto Cozzi 55, I-20125 Milano, Italy and INFN sezione di Milano-Bicocca}
\email{sara.perletti1@unimib.it}
\author{Karoline van Gemst}
\address{K.~van Gemst:\newline Dipartimento di Matematica e Applicazioni, Universit\`a di Milano-Bicocca, \newline
Via Roberto Cozzi 55, I-20125 Milano, Italy and INFN sezione di Milano-Bicocca}
\email{karoline.vangemst@unimib.it}

\begin{abstract}
\sloppypar{We extend the generalised hodograph method to regular non-diagonalisable integrable systems of hydrodynamic type, in light of the relation between such systems and F-manifolds with compatible connection. The method allows  the construction of solutions starting from  the  symmetries of the  system. In the diagonal case, the completeness of the symmetries follows from the integrability conditions that  ensure  the applicability of a Darboux's theorem on Pfaffian systems. In the regular non-diagonalisable case the validity of this  theorem relies on some further assumptions that we discuss in detail. Under these assumptions, the method provides the general solution as in Tsarev's diagonal case.}
\end{abstract}    

\maketitle

\tableofcontents

\section*{Introduction}
\textcolor{white}{...}
\vspace{-2.4em}
\newline
\newline
The present paper is devoted to extending the generalised hodograph method to non-diagonalisable systems of hydrodynamic type. Such systems appear naturally in a range of different contexts. By virtue of the techniques provided by the geometry of F-manifolds, the complexity typically portrayed by integrability in the non-diagonalisable setting is here partially tamed.
\newline
\newline
The generalised hodograph method was introduced by Tsarev in \cite{ts91} as a technique to solve diagonalisable integrable systems of hydrodynamic type:
\begin{equation*}
	\textbf{u}_t=V(\textbf{u})\,\textbf{u}_x.
\end{equation*}
The diagonalisability of such a system entails the existence of a set of coordinates, called \emph{Riemann invariants}, by means of which the system can be reduced to the diagonal form
\begin{equation}\label{dsht}
	\begin{bmatrix}
		u^{1}_t\cr
		u^2_t\cr
		\vdots \cr
		u^{n}_t
	\end{bmatrix}=
	\begin{bmatrix}
		v^{1} & 0 & \dots & 0\cr
		0  & v^{2}  & \dots & 0\cr
		\vdots & \ddots & \ddots & \vdots\cr
		0 & \dots & 0 & v^{n}
	\end{bmatrix}\begin{bmatrix}
		u^{1}_x\cr
		u^2_x\cr
		\vdots \cr
		u^{n}_x
	\end{bmatrix}.
\end{equation}
The functions $\{v^{i}(u^1,...,u^n)\}_{i\inw}$ are called \emph{characteristic velocities}, and are assumed to be pairwise distinct. By considering a vector field $X$ having such characteristic velocities as components in the Riemann invariants $X^i=v^i$, $i\inw$, and by introducing the commutative and associative product $\circ$ defined by the structure constants
\[c^i_{jk}=\delta^i_j\delta^i_k,\qquad i,j,k\inw,\]
the system \eqref{dsht} can be written in the equivalent form
\beq\label{Fsys}
u^i_t=c^i_{jk}X^ju^k_x,\qquad i\inw.
\eeq
The generalised hodograph method allows one to construct solutions
\begin{equation*}
{\bf  u}(x,t)=(u^1(x,t),...,u^n(x,t))
\end{equation*}
of the system \eqref{dsht}, in the implicit form
\begin{equation}\label{hm}
	x+v^i({\bf u}(x,t))t=w^i({\bf u}(x,t)),\qquad i\inw,
\end{equation}
in terms of the characteristic velocities $\{w^{i}(u^1,...,u^n)\}_{i\inw}$ of the symmetries,
\begin{equation}\label{symdsht}
	\begin{bmatrix}
		u^{1}_\tau\cr
		u^2_\tau\cr
		\vdots \cr
		u^{n}_\tau
	\end{bmatrix}=
	\begin{bmatrix}
		w^{1} & 0 & \dots & 0\cr
		0  & w^{2}  & \dots & 0\cr
		\vdots & \ddots & \ddots & \vdots\cr
		0 & \dots & 0 & w^{n}
	\end{bmatrix}\begin{bmatrix}
		u^{1}_x\cr
		u^2_x\cr
		\vdots \cr
		u^{n}_x
	\end{bmatrix},
\end{equation}
of the system \eqref{dsht}. The existence of symmetries relies on Tsarev's integrability 
condition
\begin{equation}
	\label{tsarev1}
	\partial_j\Gamma^i_{ik}=\partial_k\Gamma^i_{ij},\qquad i\ne j\ne k\ne i,
\end{equation}
for the quantities
\begin{equation}\label{ChS}
	\Gamma^i_{ij}:=\frac{\partial_j v^i}{v^j-v^i},\qquad i\ne j.
\end{equation}
This integrability condition may be equivalently formulated as
\begin{equation}\label{tsarev2}
	\d_i\Gamma^k_{kj}+\Gamma^k_{ki}\Gamma^k_{kj}-\Gamma^k_{kj}\Gamma^j_{ji}
	-\Gamma^k_{ki}\Gamma^i_{ij}=0,
\end{equation} 
due to the identity
\begin{equation}\label{tsarevid}
	\d_i\Gamma^k_{kj}+\Gamma^k_{ki}\Gamma^k_{kj}-\Gamma^k_{kj}\Gamma^j_{ji}
	-\Gamma^k_{ki}\Gamma^i_{ij}=\f{v^i-v^k}{v^j-v^i}(\d_j\Gamma^k_{ki} -\d_i\Gamma^k_{kj}),
\end{equation}
proved in \cite{ts91}. Conditions \eqref{tsarev1} and \eqref{tsarev2} ensure the existence of a family of symmetries \eqref{symdsht}, parametrised by $n$ arbitrary functions of a single variable, one for each Riemann invariant of the system.
\newline
\newline
\sloppy
In the present paper, we extend the generalised hodograph method to non-diagonalisable systems of hydrodynamic type of the form
\beq\label{SHT-intro}
u^i_t=V^i_j(u)u^j_x,\qquad i\inw,
\eeq
defined by a block-diagonal matrix $V=\text{diag}(V_{(1)},...,V_{(r)})$, $r\leq n$, whose generic $\alpha^{\text{th}}$ block, of size $m_\alpha$, is of the lower-triangular Toeplitz form
\begin{equation}\label{toeplitz}
	V_{(\alpha)}=
	\begin{bmatrix}
		v^{1(\alpha)} & 0 & \dots & 0\cr
		v^{2(\alpha)} & v^{1(\alpha)} & \dots & 0\cr
		\vdots & \ddots & \ddots & \vdots\cr
		v^{m_\alpha(\alpha)} & \dots & v^{2(\alpha)} & v^{1(\alpha)}
	\end{bmatrix}.
\end{equation}
Analogously to the semisimple setting, such block-diagonal systems can be written in the form \eqref{Fsys}, by introducing the commutative and associative product $\circ$ defined by the structure constants
\begin{equation}\label{strconsts_reg}
	c^{i(\alpha)}_{j(\beta)k(\gamma)}=\delta^\alpha_\beta\delta^\alpha_\gamma\delta^i_{j+k-1},
\end{equation}
for all $\alpha,\beta,\gamma\in\{1,\dots,r\}$ and  $i\in\{1,\dots,m_\alpha\}$, $j\in\{1,\dots,m_\beta\}$, ${k\in\{1,\dots,m_\gamma\}}$. These products appear in the theory of F-manifolds with Euler vector fields. More precisely, coordinates realising \eqref{Fsys}, called \emph{canonical coordinates}, were provided in \cite{DH} when the product defined by \eqref{strconsts_reg} induces the structure of an F-manifold with Euler vector field and satisfies the \emph{regularity} requirement. That is, the operator of multiplication by the Euler vector field admits a canonical decomposition such that to each Jordan block there corresponds a distinct eigenvalue. It will be  convenient to relabel the coordinates $(u^1,\dots,u^n)$ according to the following rule. For each $\alpha\in\{2,\dots,r\}$ and for each $j\in\{1,\dots,m_\alpha\}$ we write
\begin{equation}
	\label{relabellingcoordinates}
	j(\alpha)=m_1+\dots+m_{\alpha-1}+j,
\end{equation}
so that $u^{j(\alpha)}$ denotes the $j$-th coordinate associated to the Jordan block with label $\alpha$ (for $\alpha=1$ we set $j(\alpha)=j$). The regular setting reduces to the semisimple one when $r=n$, in which case one retrieves systems of hydrodynamic type admitting a diagonal representation.
\newline
\newline
Integrable systems of hydrodynamic type appear in particular as related to a class of F-manifolds introduced in \cite{LPR} (see also \cite{LP}) under the name of \emph{F-manifolds with compatible connection (and flat unit)}. These were extensively studied in \cite{LPVG}. This relation is pivotal in generalising Tsarev's approach to the non-diagonalisable setting, as in the semisimple case the quantities \eqref{ChS} can be identified with part of the Christoffel symbols of a torsionless  connection  $\nabla$ (called in \cite{LP} the \emph{natural connection}) and Tsarev's  integrability conditions coincide with the vanishing of some components of the associated Riemann tensor. As shown in \cite{LPR}, Tsarev's condition \eqref{tsarev2} can be written in one of the two equivalent forms
\newline
\beq\label{shc-intro}
R^s_{lmi}c^j_{ks}+R^s_{lik}c^j_{ms}+R^s_{lkm}c^j_{is}=0,\qquad R^j_{skl}c^s_{mi}+R^j_{smk}c^s_{li}+R^j_{slm}c^s_{ki}=0.
\eeq
\newline
In \cite{LPVG}, it was proved that, under the regularity assumption, the existence of  $n$ commuting flows of the form \eqref{Fsys} allows one to  introduce a connection $\nabla$ satisfying
the conditions \eqref{shc-intro}. As a consequence, in the regular case,  an integrable system of hydrodynamic type locally defines the structure of an F-manifold with compatible connection (and flat unit). By exploiting this relation, it is possible to generalise Tsarev's integrability method to the non-semisimple regular setting. In accordance with the double-index notation \eqref{relabellingcoordinates}, let us introduce the $n$-dimensional vector  
\newline
\[(v^1,\dots,v^n)=(v^{1(1)},\dots,v^{m_1(1)},\dots,v^{1(r)},\dots,v^{m_r(r)}),\] 
\newline
where  $n=m_1+\cdots+m_r$. A key point is the following.
\newline
\newline
{\bf Main observation I}. \emph{The matrix with entries
\begin{equation*}
	M^i_j=t\,\d_jv^i-\d_jw^i,\qquad i,j\inw,
\end{equation*}
evaluated  at the solution ${\bf  u}(x,t)$ of  the algebraic system
\[x\,\delta^{i}_1+t\,v^{i(\alpha)}=w^{i(\alpha)},\qquad\alpha\in\{1,...,r\},\,i\in\{1,...,m_{\alpha}\},\]
has precisely the same form as the matrix $V$ defining the system \eqref{SHT-intro}, meaning that there exists a vector valued function $Z$ such that
 $M({\bf  u}(x,t))=Z\circ$. For instance, in the regular  case and in canonical coordinates, $M({\bf  u}(x,t))$ has a block-diagonal form, each block being as in \eqref{toeplitz}.}
\newline
\newline
Using this crucial observation, it is not difficult to prove that the solution of the same algebraic system is also a solution of the system \eqref{SHT-intro}.
\newline
\newline
In the diagonal Tsarev's case, the linear system of  PDEs for the symmetries
 has the form studied by Darboux in \emph{Leçons sur les systemes orthogonaux et les coordonées curvilignes} and the integrability condition \eqref{shc-intro} 
 ensures the applicability of Theorem III (p. 335 of \cite{darboux}). Tsarev used a slight  extension of this  theorem (Proposition 1 in \cite{ts91}) to prove that, in a neighbourhood of  a point $(x_0,t_0)$, any solution can be obtained applying the generalised hodograph method. 
\newline
\newline 
The non-diagonal case is more involved. Another key point is the following.
\newline
\newline
{\bf Main observation II}. 
\emph{In the non-diagonal regular case which we  study in this paper, Darboux's Theorem III cannot be directly applied, since the linear  system for the symmetries does not in general have  the form considered by Darboux. This requires some further assumptions. Under these additional assumptions,  in a neighbourhood of  a point, any solution can be obtained applying the generalised hodograph method. } 
\newline
\newline
In the case of a single Jordan block, for instance, where
 \begin{equation*}
	V=
	\begin{bmatrix}
		v^{1} & 0 & \dots & 0\cr
		v^{2}  & v^{1} & \dots & 0\cr
		\vdots & \ddots & \ddots & \vdots\cr
		v^{n} & \dots & v^{2} & v^{1}
	\end{bmatrix},
\end{equation*}
one has to assume that $v^i=v^i(u^1,\dots,u^i),\,i\inw$.
\newline
\newline
The extension of the generalised hodograph method to non-diagonalisable systems of hydrodynamic type finds application in many areas. Among them, we mention generalised Lenard-Magri chains associated with the Frölicher–Nijenhuis bicomplex, non-semisimple (bi-)flat F-manifolds and Dubrovin-Frobenius manifolds (see, for instance, \cite{LPR,LP,LP22,LP23} and references therein) and Hamiltonian and bihamiltonian structures of hydrodynamic type and their deformations (see \cite{DVLS}). It is also worth mentioning \cite{KK,KO}. Some generalisation of the  hodograph method appeared in literature in the study of some specific examples (for instance in the study of reductions of the soliton gas kinetic equation \cite{FP,VF}) but as far as we know  a general proof of this method has never been given before. Remarkably the case of a single Jordan block of arbitrary size is related to the mKP hierarchy (see \cite{XF}). We think that
the understanding of integrability in the non-diagonalisable case will be  also important in  the  study  of dispersive  deformations of systems of  hydrodynamic type  \cite{DLZ}.  The general results obtained  so far require the existence of Riemann invariants for the dispersionless limit. Without this assumption only a few preliminary results are  available (see  for  instance \cite{DVLS}  for  the bihamiltonian  case).
\newline
\newline
The paper is organised  as  follows. In Section \ref{SectionFmnfs}, we recall the definition of an F-manifold, the notion of regularity, and the additional structure induced by a compatible connection. In Section \ref{SectionGenHodMethod}, we extend the generalised hodograph method for such systems, adapting the original construction provided by Tsarev in \cite{ts91}. In order to do this it is necessary to fully exploit the correspondence between such  systems  and F-manifolds with compatible connection. Section \ref{section:int} consists of a discussion of the  symmetries for non-diagonalisable integrable systems of hydrodynamic type admitting a block-diagonal Toeplitz form as introduced above. In particular, we consider the compatibility of the system of symmetries in the wake of the results obtained in \cite{LPVG} in relation with integrable hierarchies. Section \ref{section:completeness} is devoted to a further study of the symmetries. Here, we prove necessary and sufficient conditions for the applicability of Darboux's Theorem III  to the linear systems for the symmetries. We call  this property \emph{completeness}. This implies that the general solution of the system depends on $n$ arbitrary functions of a single variable. The main difference with respect to the diagonal case is that not  all  the variables appear in these arbitrary functions:   each block contributes with a number of functions equal to the size of the block and depending  on the \textit{main variable} of the block, i.e. ${u^{1(\alpha)}}$ for the $\alpha^{\text{th}}$ block. Remarkably it turns out that, as in Tsarev's diagonal case, the completeness of the  symmetries allows one to obtain any solution using the generalised hodograph method. 
In the final section, we draw conclusions and, motivated by some examples, speculate a relation between completeness of the symmetries and the existence of Hamiltonian structures.
\newline
\newline
\noindent{\bf  Data availability}. No datasets were generated or analysed in this work.
\newline
\newline
\noindent{\bf Acknowledgements}. The authors are supported by funds of  INFN   (Istituto Nazionale di Fisica Nucleare) by IS-CSN4 Mathematical Methods of Nonlinear Physics. Authors are also thankful to GNFM (Gruppo Nazionale di Fisica Matematica) for supporting activities that contributed to the research reported in this paper. This research has received funding by the Italian PRIN 2022 (2022TEB52W)  \emph{The charm of integrability: from nonlinear waves to random matrices}.

\newpage
\section{Preliminaries}
\subsection{Regular F-manifolds with compatible connection}\label{SectionFmnfs}
\subsubsection{F-manifolds}
F-manifolds were introduced by Hertling and Manin in \cite{HM} and are defined as follows.
\begin{definition}\label{defFmani}
An \emph{F-manifold} is a manifold $M$ equipped with
\begin{itemize}
\item[(i)] a commutative associative bilinear product  $\circ$  on the module of (local) vector fields, satisfying the following identity:
\begin{equation}\label{HM}
\mathcal{L}_{X\circ Y} \circ=X\circ (\mathcal{L}_Y \circ) +Y\circ (\mathcal{L}_X\circ ),
\end{equation}
for all local vector fields $X,Y$;
\newline
\item[(ii)] a distinguished vector field $e$ on $M$ such that 
\[e\circ X=X\] 
for all local vector fields $X$.
\end{itemize}
\end{definition}
\textcolor{white}{...}
\vspace{-2.4em}
\newline
\newline
Condition \eqref{HM} is known as the \emph{Hertling-Manin condition} and can be shown to be equivalent to
\begin{align}
&[X\circ Y,W\circ Z]-[X\circ Y, Z]\circ W-[X\circ Y, W]\circ Z\label{HMeq1free}\\
&-X\circ [Y, Z \circ W]+X\circ [Y, Z]\circ W +X\circ [Y, W]\circ Z\notag\\
&-Y\circ [X,Z\circ W]+Y\circ [X,Z]\circ W+Y\circ [X, W]\circ Z=0,\notag
\end{align}
for all local vector fields $X,Y,W, Z$, where $[X,Y]$ is the Lie bracket.
\subsubsection{F-manifolds with compatible connection}
F-manifolds are usually equipped with additional structures. 
\begin{definition}\label{defFwithE}
An \emph{F-manifold with Euler vector field} is an F-manifold $M$ equipped with a vector field $E$ satisfying
\begin{equation}
	\mathcal{L}_E \circ=\circ.\label{Euler}
\end{equation}
\end{definition}
\textcolor{white}{...}
\vspace{-2.4em}
\newline
\newline
One can easily observe that \eqref{Euler} implies that
\begin{equation*}
	[e,E]=e.
\end{equation*}
Following \cite{LPR}, we now introduce the notion of F-manifold with compatible connection (and flat unit).
\begin{definition}\label{Fmnfwcompatconn_flatunit}
An \emph{F-manifold with compatible connection and flat unit} is a manifold $M$ equipped with a product 
\[\circ : TM \times TM \rightarrow TM,\] 
with structure functions $c^i_{jk}$, a connection $\nabla$ with Christoffel symbols 
$\Gamma^i_{jk}$ and a distinguished vector field $e$ such that
\begin{itemize}
\item[(i)] the one-parameter family of  connections $\{\nabla_{\lambda}\}_{\lambda}$ with Christoffel symbols
$$\Gamma^i_{jk}-\lambda c^i_{jk},$$
gives a torsionless connection for any choice of $\lambda$ for which the Riemann  tensor coincides
 with the Riemann tensor of $\nabla \equiv \nabla_0$
\beq\label{curvlambda}
R_{\lambda}(X,Y)(Z)=R(X,Y)(Z),
\eeq
and satisfies the condition
\beq\label{rc-intri}
Z\circ R(W,Y)(X)+W\circ R(Y,Z)(X)+Y\circ R(Z,W)(X)=0,
\eeq
for all local vector fields $X$, $Y$, $Z$, $W$;
\item[(ii)] $e$ is the unit of the product;
\item[(iii)] $e$ is flat: $\nabla e=0$.
\end{itemize}
\end{definition}
\textcolor{white}{...}
\vspace{-2.4em}
\newline
\newline
Let us discuss some consequences of condition \eqref{curvlambda}.  
For a given $\lambda$, the torsion and curvature are respectively given by
\begin{eqnarray*}
T^{(\lambda)k}_{ij}&=&\Gamma^k_{ij}-\Gamma^k_{ji}+\lambda(c^k_{ij}-c^k_{ji}),\\
R^{(\lambda)k}_{ijl}&=&R^k_{ijl}+\lambda(\nabla_i c^k_{jl}-\nabla_j c^k_{il})+\lambda^2(c^k_{im}c^m_{jl}-c^k_{jm}c^m_{il}),
\end{eqnarray*}
where $R^k_{ijl}$ is the Riemann tensor of $\nabla$. Thus, condition (i) is equivalent to
\begin{enumerate}
\item the vanishing of the torsion of $\nabla$;
\item the commutativity of the product  $\circ$;
\item the symmetry  in the lower indices of the tensor field $\nabla_l c^k_{ij}$;
\item the associativity of the product $\circ$.
\end{enumerate}

\begin{remark}
Since the Hertling-Manin condition \eqref{HMeq1free} follows from the symmetry  in the lower indices of the tensor field $\nabla_l c^k_{ij}$, as shown in \cite{He02},  F-manifolds with compatible connection and flat unit constitute a broad class of F-manifolds.
\end{remark}

\begin{remark}
The  operator  of multiplication by the Euler vector field has vanishing Nijenhuis torsion:
$$N_{E\circ}(X,Y):=[E\circ X,E\circ Y]+E\circ E\circ [X,Y]-E\circ [X,E\circ Y]-E\circ [E\circ X,Y]=0.$$
By virtue of this property, families of examples were constructed in \cite{LP23,LPVG}.
\end{remark}

\begin{remark}	
Condition \eqref{rc-intri} can be written in the equivalent  form  
\begin{equation}\label{rc-intri-2}
R(Y,Z)(X\circ W)+R(X,Y)(Z\circ W)+R(Z,X)(Y\circ W)=0,
\end{equation}
for all local vector fields $X$, $Y$, $Z$, $W$ (see \cite{LPR} for  details).
\end{remark}
\textcolor{white}{...}
\vspace{-2.4em}
\newline
\newline
In local coordinates conditions \eqref{rc-intri} and \eqref{rc-intri-2} read  \eqref{shc-intro}.
\subsubsection{Regularity} 
Following \cite{DH}, we present the notion of a regular F-manifold.
\begin{definition}[\cite{DH}]\label{DavidHertlingdef}
An F-manifold with Euler vector field $(M,\circ, e,E)$ is called \emph{regular} if for each $p\in M$ the matrix representing the endomorphism
$$L_p := E_p\circ : T_pM \to T_pM$$
has exactly one Jordan block for each distinct eigenvalue.
\end{definition}
\textcolor{white}{...}
\vspace{-2.4em}
\newline
\newline
In this paper, we will use the following important result of \cite{DH} regarding the existence of non-semisimple canonical coordinates for regular F-manifolds with Euler vector field.
\begin{theorem}[\cite{DH}]\label{DavidHertlingth}
Let $(M, \circ, e, E)$ be a regular F-manifold of dimension $n \geq 2$ with an Euler vector field $E$. Furthermore, assume that locally around a point $p\in M$, the Jordan canonical form of the operator $L$ has $r$ Jordan blocks of sizes $m_1,...,m_r$ with distinct eigenvalues. Then
there exists locally around $p$ a distinguished system of coordinates $\{u^1, \dots, u^{m_1+\dots+m_r}\}$ such that
 \begin{align}
	e^{i(\alpha)}&=\delta^i_1,\qquad
	E^{i(\alpha)}=u^{i(\alpha)},\qquad
	c^{i(\alpha)}_{j(\beta)k(\gamma)}=\delta^\alpha_\beta\delta^\alpha_\gamma\delta^i_{j+k-1},\notag
\end{align}
for all $\alpha,\beta,\gamma\in\{1,\dots,r\}$ and  $i\in\{1,\dots,m_\alpha\}$, $j\in\{1,\dots,m_\beta\}$, ${k\in\{1,\dots,m_\gamma\}}$.
\end{theorem}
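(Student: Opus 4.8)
The plan is to reconstruct everything from the operator of multiplication by the Euler field, $L:=E\circ$, using two structural facts. First, $N_{E\circ}=0$ (recorded in the remark above), so $L$ is a Nijenhuis operator. Second, regularity prescribes the pointwise Jordan type of $L$: locally its characteristic polynomial factors as $\prod_{\alpha=1}^{r}(\mu-\lambda_\alpha)^{m_\alpha}$ with pairwise distinct smooth eigenvalue functions $\lambda_\alpha$ and a single Jordan block over each factor; since the $\lambda_\alpha$ stay distinct, the sizes $m_\alpha$ are locally constant. The goal is to produce coordinates in which $L$ is block-diagonal Toeplitz, and then to read off $e$, $E$ and the structure constants.

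The first step is to split the blocks. Because the $\lambda_\alpha$ are distinct, Lagrange interpolation yields polynomials $p_\alpha$ for which the projectors $\pi_\alpha:=p_\alpha(L)$ are smooth and satisfy $\sum_\alpha\pi_\alpha=\mathrm{Id}$, $\pi_\alpha\pi_\beta=\delta_{\alpha\beta}\pi_\alpha$. Associativity gives $L^k=E^{\circ k}\circ$, whence $\pi_\alpha=w_\alpha\circ$ with $w_\alpha:=p_\alpha^{\circ}(E)$ (the polynomial evaluated in the $\circ$-algebra); thus the $w_\alpha$ are the orthogonal idempotents of the tangent algebra, $w_\alpha\circ w_\beta=\delta_{\alpha\beta}w_\alpha$ and $\sum_\alpha w_\alpha=e$. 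They split $TM=\bigoplus_\alpha D_\alpha$ into $\circ$- and $L$-invariant subalgebras $D_\alpha=\mathrm{im}\,\pi_\alpha$ of rank $m_\alpha$, with $L$ having the single eigenvalue $\lambda_\alpha$ on $D_\alpha$. Vanishing Nijenhuis torsion makes each $D_\alpha$ integrable, so there is a local splitting $M\cong M_1\times\cdots\times M_r$, and the idempotent decomposition makes the product block-diagonal, accounting for the factors $\delta^\alpha_\beta\delta^\alpha_\gamma$. It therefore suffices to treat a single block, so I would assume $r=1$, $n=m$, with $L$ carrying one eigenvalue $\lambda$ and one Jordan block.

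On a single block, regularity makes the tangent algebra cyclic, so there is a vector field $g$ generating the nilpotent part of the multiplication, and $e,g,g^{\circ2},\dots,g^{\circ(m-1)}$ is a frame. For any such \emph{power frame} one has $g^{\circ(j-1)}\circ g^{\circ(k-1)}=g^{\circ(j+k-2)}$, so \emph{if} the frame is holonomic, i.e. $[g^{\circ(j-1)},g^{\circ(k-1)}]=0$ for all $j,k$, then its dual coordinates $u^1,\dots,u^m$ satisfy $\partial_k=g^{\circ(k-1)}$, giving $c^i_{jk}=\delta^i_{j+k-1}$ identically and $e=\partial_1$, i.e. $e^i=\delta^i_1$. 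It then remains to choose $g$ so that (a) the power frame commutes and (b) $E$ takes the radial form $E=\sum_i u^i\partial_i$. Here the Euler condition is the organizing principle: $\mathcal{L}_E\circ=\circ$ turns the block into a weight-graded algebra with $E$ as grading field, and I would fix $g$ using this grading together with the normalization coming from $[e,E]=e$ (a short computation then gives $e(\lambda)=1$, so that $u^1:=\lambda$ is a coordinate with $\partial_{u^1}=e$). The Hertling–Manin identity together with $N_{E\circ}=0$ would be invoked to show that $g$ can be chosen with a commuting power frame, after which the radial form $E^i=u^i$ follows from the grading.

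The hard part will be precisely this last, within-block step. Over a single block the Nijenhuis operator $L$ carries only the one functional invariant $\lambda$, so the remaining $m-1$ coordinates cannot come from $L$ alone: they must be extracted from the multiplicative structure and the Euler grading together. The real content is the integrability (holonomy) statement, namely that the pointwise normal form $T_pM\cong\mathbb{C}[\epsilon]/(\epsilon^m)$ with $E\leftrightarrow\lambda+\epsilon$-data can be trivialized over a whole neighbourhood by a coordinate change compatible with both $\circ$ and $E$, equivalently that the power frame of a suitably chosen generator $g$ closes up to a commuting frame. This is exactly where $N_{E\circ}=0$ and the Hertling–Manin condition are indispensable, since they are what force the frame built from $g$ to be holonomic and thereby propagate the Toeplitz relations and the radial form of $E$ off the base point.
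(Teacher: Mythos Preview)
The paper does not contain a proof of this theorem. It is quoted from \cite{DH} (David--Hertling) as an external input and used without argument; the only related content in the present paper is the remark that $N_{E\circ}=0$, which you correctly invoke. So there is no ``paper's own proof'' to compare against.

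As an independent sketch, your outline is broadly along the right lines and matches the architecture of the original David--Hertling proof: decompose $TM$ into the generalised eigenbundles of $L=E\circ$ via polynomial projectors, identify these with the idempotent decomposition of the $\circ$-algebra, use $N_L=0$ to integrate the splitting into a local product $M\cong M_1\times\cdots\times M_r$, and then treat each factor as a single-block problem. Your observation that on a single block $L$ has only one functional invariant, so the remaining coordinates must come from the multiplicative/Euler data rather than from $L$ alone, is exactly the point.

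The gap is in the within-block step, which you yourself flag as ``the hard part'' but do not actually carry out. Saying that ``the Hertling--Manin identity together with $N_{E\circ}=0$ would be invoked to show that $g$ can be chosen with a commuting power frame'' is a statement of hope, not an argument: you have not explained how to pick $g$, nor why the resulting frame $e,g,g^{\circ2},\dots$ is holonomic, nor why $E$ then takes the radial form $E^i=u^i$. In the David--Hertling proof this is the substantive analytic step (it goes through a careful normalisation using the Euler grading and an inductive construction of the coordinates on the nilpotent part), and it does not follow formally from the ingredients you list. In particular, $N_{E\circ}=0$ by itself only gives integrability of the block splitting and of the generalised eigendistributions; it does not automatically produce a commuting power frame inside a single block. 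If you want a self-contained proof you will need to supply this construction explicitly, or simply cite \cite{DH} as the present paper does.
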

\textcolor{white}{...}
\vspace{-2.4em}
\newline
\newline
The coordinates defining this special system are called \emph{David-Hertling coordinates}, and in such coordinates the operator  of multiplication by the Euler vector field takes the lower-triangular Toeplitz form \eqref{toeplitz} mentioned in the introduction  with $v^{i(\alpha)}=u^{i(\alpha)}$. We will also often refer to David-Hertling coordinates as \emph{canonical coordinates}.
\subsubsection{Integrable hierarchies and  F-manifolds with  compatible connection}\label{SubsectionIntHsAndFmnfsCC}
According to the results of \cite{LPVG}, given  an $n$-dimensional regular F-manifold with Euler vector field $(M,\circ,e,E)$ and $n$ commuting flows of the form
\begin{equation}
u^i_t=c^i_{jk}X^j_{(s)}u^k_x,\qquad i,s\inw,
\end{equation} 
defined by a frame of vector fields $(X_{(1)},...,X_{(n)})$, under mild technical assumptions, a unique torsionless connection $\nabla$ is defined by the conditions
\begin{equation}\label{cond1}
\nabla_j e^i=0,\qquad i,j\inw,
\end{equation}
\begin{equation}\label{cond2}
(d_{\nabla}V)^i_{jk}=0,\qquad i,j,k\inw,
\end{equation}
where $V=X\circ$, defines on $M$ the structure of an F-manifold with compatible connection and flat unit. More precisely, we recall the following.
\begin{proposition}\label{LPVG_Prop3.10}(\cite{LPVG})
	For $(M,\circ,e)$ being an $n$-dimensional regular F-manifold and $X$ being a local vector field realising
	\begin{itemize}
		\item $X^{1(\alpha)}\neq X^{1(\beta)}$ for $\alpha\neq\beta$, $\alpha,\beta\in\{1,\dots,r\}$,
		\item $X^{2(\alpha)}\neq0$, $\alpha\in\{1,\dots,r\}$,
	\end{itemize}
	there exists a unique torsionless connection $\nabla$ such that $d_\nabla(X\circ)=0$ and $\nabla e=0$.
\end{proposition}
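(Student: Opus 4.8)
The plan is to reduce the existence and uniqueness of $\nabla$ to a linear-algebraic problem block by block, exploiting the regularity hypothesis and the two genericity conditions on $X$. The two equations to be solved for the Christoffel symbols $\Gamma^i_{jk}$ are the torsionless flatness-of-unit condition $\nabla e=0$, which in canonical coordinates where $e^{i(\alpha)}=\delta^i_1$ becomes the simple set of relations $\Gamma^i_{j1}=0$ (together with the symmetry $\Gamma^i_{jk}=\Gamma^i_{kj}$), and the zero-curvature-type condition $d_\nabla(X\circ)=0$ written out in \eqref{cond2} as
\begin{equation*}
c^i_{js}\nabla_k X^s-c^i_{ks}\nabla_j X^s=0,\qquad i,j,k\inw.
\end{equation*}
Here $V=X\circ$ is the block-diagonal Toeplitz matrix \eqref{toeplitz} with the structure constants \eqref{strconsts_reg}, so $\nabla_k X^s=\partial_k X^s+\Gamma^s_{kt}X^t$. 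First I would substitute the explicit form of $c^i_{jk}=\delta^\alpha_\beta\delta^\alpha_\gamma\delta^i_{j+k-1}$ into \eqref{cond2} and observe that, because the structure constants vanish across distinct blocks, the equation decouples: the cross-block Christoffel symbols are governed by equations not involving the within-block unknowns, and the diagonal blocks are governed by a separate Toeplitz-convolution system. The whole scheme is to show that \eqref{cond2}, viewed as a linear system in the unknowns $\Gamma^i_{jk}$ with coefficients built from the components of $X$ and its derivatives, has a unique solution precisely when the two genericity conditions hold.

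Next I would analyze the two regimes of indices. For the \emph{off-diagonal} (distinct-block) Christoffel symbols, setting $i$ in block $\alpha$, $j$ in block $\beta$, $k$ in block $\gamma$ and using that $c^i_{js}$ forces $s$ into block $\alpha$, condition \eqref{cond2} produces relations of the form $(X^{1(\alpha)}-X^{1(\beta)})\Gamma^{i(\alpha)}_{j(\beta)k(\gamma)}=(\text{known terms in }\partial X)$, where the coefficient is exactly the difference of leading eigenvalues. This is where the hypothesis $X^{1(\alpha)}\neq X^{1(\beta)}$ for $\alpha\neq\beta$ enters: it guarantees invertibility of the coefficient, so these symbols are uniquely determined. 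For the \emph{within-block} symbols (all three indices in the same block $\alpha$), the Toeplitz structure turns \eqref{cond2} into a discrete convolution system in the block variable, and solving it triangularly requires dividing by the first off-diagonal entry $X^{2(\alpha)}$; this is where $X^{2(\alpha)}\neq0$ is used, ensuring the leading coefficient of the triangular system is nonzero and hence the within-block symbols are uniquely recoverable by back-substitution. The condition $\nabla e=0$ then either fixes or is automatically consistent with the remaining symbols carrying a lower index equal to $1$ within each block.

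The main obstacle I anticipate is the within-block computation: unlike the semisimple case where each block is $1\times 1$ and \eqref{cond2} is an algebraic division, here one must carefully organize the Toeplitz convolution so that the unknowns $\Gamma^{i(\alpha)}_{j(\alpha)k(\alpha)}$ can be solved in the correct order, verifying at each stage that the relevant coefficient is a nonzero combination of $X^{2(\alpha)}$ and already-determined quantities, and that the torsionless symmetry $\Gamma^i_{jk}=\Gamma^i_{kj}$ and the flat-unit constraints are simultaneously compatible rather than over-determined. I would handle this by introducing the $m_\alpha$-dimensional regular nilpotent shift and writing the multiplication operator as a polynomial in it, so that the convolution becomes multiplication in the truncated polynomial ring $\mathbb{C}[z]/(z^{m_\alpha})$; invertibility of the relevant map then reduces to the statement that an element of this ring with nonzero coefficient in degree one (namely $X^{2(\alpha)}$) generates, together with the identity, enough of the ring to pin down all symbols. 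Once both regimes yield unique solutions and the symmetry/flatness constraints are checked to be consistent, existence and uniqueness of $\nabla$ follow, completing the proof.
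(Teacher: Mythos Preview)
Your overall strategy matches the paper's (the proposition is quoted from \cite{LPVG}, and Remark~\ref{ChristoffelExplicit} recalls the explicit recursive formulas produced by that proof): one solves for the Christoffel symbols case by case, dividing by $X^{1(\alpha)}-X^{1(\beta)}$ in the cross-block equations and by $X^{2(\alpha)}$ in the within-block Toeplitz recursion, with a back-substitution ordering exactly as you describe. One correction worth making before you carry this out: in canonical coordinates $e=\sum_\sigma\partial_{1(\sigma)}$, so $\nabla e=0$ reads $\sum_\sigma\Gamma^{i}_{j\,1(\sigma)}=0$ rather than $\Gamma^i_{j\,1(\sigma)}=0$ for each block separately; in particular the symbols $\Gamma^{i(\alpha)}_{j(\beta)1(\alpha)}$ with $\alpha\neq\beta$ are generically nonzero and are obtained first from $d_\nabla(X\circ)=0$, after which $\nabla e=0$ is used only to fix the remaining ones as in \eqref{Chr_nablae_1}--\eqref{Chr_nablae_2}.
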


\begin{remark}\label{ChristoffelExplicit}
	The proof of Proposition \ref{LPVG_Prop3.10} provides explicit formulas for the Christoffel symbols $\{\Gamma^i_{jk}\}_{i,j,k\in\{1,\dots,n\}}$ of $\nabla$, in terms of $\{\partial_j X^i\}_{i,j\in\{1,\dots,n\}}$, which we shall now recall. Let us fix $\alpha,\beta,\gamma\in\{1,\dots,r\}$, $\alpha\neq\beta\neq\gamma\neq\alpha$, and $i\in\{1,\dots,m_\alpha\}$. The Christoffel symbols of the unique torsionless connection satisfying $\nabla e=0$ and $d_\nabla V=0$, for $V=X\circ$, are determined by the following formulas:\small
	\begin{align}
		 \quad&\Gamma^{i(\alpha)}_{j(\beta)k(\gamma)}=0,\qquad j\in\{1,\dots,m_\beta\},\,k\in\{1,\dots,m_\gamma\};\label{Chr_a}\\
		 \label{Chr_b}\quad&\Gamma^{i(\alpha)}_{j(\beta)k(\alpha)}=-\frac{1}{X^{1(\alpha)}-X^{1(\beta)}}\Bigg(\partial_{j(\beta)}X^{(i-k+1)(\alpha)}+\overset{m_\alpha}{\underset{s=k+1}{\sum}}\Gamma^{i(\alpha)}_{j(\beta)s(\alpha)}X^{(s-k+1)(\alpha)}\\&\quad\quad\qquad\,\,\,\,-\overset{m_\beta}{\underset{s=j+1}{\sum}}\Gamma^{i(\alpha)}_{k(\alpha)s(\beta)}X^{(s-j+1)(\beta)}\Bigg), \qquad k\in\{1,\dots,m_\alpha\},\,j\in\{1,\dots,m_\beta\};\notag\\
		\quad&\Gamma^{i(\alpha)}_{j(\beta)1(\beta)}=-\Gamma^{i(\alpha)}_{j(\beta)1(\alpha)}\label{Chr_nablae_1},\qquad j\in\{1,\dots,m_\beta\},\\
		&\Gamma^{i(\alpha)}_{j(\alpha)1(\alpha)}=-\overset{}{\underset{\sigma\neq\alpha}{\sum}}\,\Gamma^{i(\alpha)}_{j(\alpha)1(\sigma)},\qquad j\in\{1,\dots,m_\alpha\};\label{Chr_nablae_2}\\
		\quad&\Gamma^{i(\alpha)}_{k(\alpha)j(\alpha)}=\frac{1}{X^{2(\alpha)}}\bigg(\partial_{(j-1)(\alpha)}X^{(i-k+1)(\alpha)}+\overset{m_\alpha}{\underset{s=k+1}{\sum}}\Gamma^{i(\alpha)}_{(j-1)(\alpha)s(\alpha)}X^{(s-k+1)(\alpha)}\label{Chr_d}\\&\quad\quad\quad\,\,\,\,-\partial_{k(\alpha)}X^{(i-j+2)(\alpha)}-\overset{m_\alpha}{\underset{s=j+1}{\sum}}\Gamma^{i(\alpha)}_{k(\alpha)s(\alpha)}X^{(s-j+2)(\alpha)}\bigg),\notag\\&\text{for } j, k\in\{2,\dots,m_\alpha\}, \text{ where } j \leq k;\notag\\
		&\Gamma^{i(\alpha)}_{k(\beta)j(\beta)}=\frac{1}{X^{2(\beta)}}\Bigg(\overset{m_\beta}{\underset{s=k+1}{\sum}}\Gamma^{i(\alpha)}_{(j-1)(\beta)s(\beta)}X^{(s-k+1)(\beta)}-\overset{m_\beta}{\underset{s=j+1}{\sum}}\Gamma^{i(\alpha)}_{k(\beta)s(\beta)}X^{(s-j+2)(\beta)}\Bigg),\label{Chr_c}\\&\text{for }k, j\in\{2,\dots,m_\beta\}, \text{ where } j \leq k.\notag
	\end{align}\normalsize
\end{remark}
\textcolor{white}{...}
\vspace{-2.4em}
\begin{remark}\label{rmkSym}
As observed in \cite{LPVG}, thanks to the identity $\nabla_ec^i_{jk}=0$, for any local vector field $Y$, the condition 
\[(d_{\nabla}(Y\circ))^i_{jk}=c^i_{js}\nabla_kY^s-c^i_{ks}\nabla_jY^s+(\nabla_kc^i_{js}-\nabla_jc^i_{ks})Y^s=0\]
 implies separately the condition
\begin{equation}\label{symndshtY2}
c^i_{js}\nabla_kY^s-c^i_{ks}\nabla_jY^s=0
\end{equation}
and the condition
\begin{equation}\label{symndshtY3}
(\nabla_kc^i_{js}-\nabla_jc^i_{ks})Y^s=0.
\end{equation}
\end{remark}
An F-manifold with compatible connection and flat unit can be constructed by means of such a connection, when the local vector field $X$ meeting the assumptions of Proposition \ref{LPVG_Prop3.10} belongs to a set of $n$ linearly independent local vector fields defining pairwise commuting flows.
\begin{theorem}\label{LPVG_mainTh}(\cite{LPVG})
	Let $\{X_{(0)},\dots,X_{(n-1)}\}$ be a set of linearly independent local vector fields on an $n$-dimensional regular F-manifold $(M,\circ,e)$. Let us assume that the corresponding flows
	\begin{align}
		{\bf  u}_{t_i}=X_{(i)}\circ {\bf u}_x,\qquad i\in\{0,\dots,n-1\},
		\notag
	\end{align}
	pairwise commute, and that there exists a local vector field $X\in\{X_{(0)},\dots,X_{(n-1)}\}$ such that ${X^{1(\alpha)}\neq X^{1(\beta)}}$ for ${\alpha\neq\beta}$, $\alpha,\beta\in\{1,\dots,r\}$, and $X^{2(\alpha)}\neq0$ for each ${\alpha\in\{1,\dots,r\}}$. Then an F-manifold $(M,\circ,e,\nabla)$ with compatible connection and flat unit is defined by the unique torsionless connection $\nabla$ realising $\nabla e=0$ and $d_\nabla(X\circ)=0$.
\end{theorem}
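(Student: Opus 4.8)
The plan is to verify, in turn, the conditions of Definition~\ref{Fmnfwcompatconn_flatunit} for the data $(\circ,e,\nabla)$. Several of them come for free. Since $(M,\circ,e)$ is an F-manifold, the product $\circ$ is commutative and associative and $e$ is its unit, so condition (ii) holds, as do the commutativity (2) and associativity (4) among the four reformulations of condition (i) recorded after Definition~\ref{Fmnfwcompatconn_flatunit}. By Proposition~\ref{LPVG_Prop3.10}, the connection $\nabla$ determined by $\nabla e=0$ and $d_\nabla(X\circ)=0$ exists, is unique and torsionless; this is exactly the torsion-freeness (1) together with condition (iii). Hence the entire content of the theorem is concentrated in the two remaining requirements: the symmetry of $\nabla_l c^k_{ij}$ in its lower indices (reformulation (3), the order-$\lambda$ part of \eqref{curvlambda}) and the curvature condition \eqref{rc-intri}, equivalently \eqref{shc-intro}.

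The route from the hypothesis of $n$ commuting flows to these conditions passes through the remark that a flow $\mathbf{u}_\tau=(Y\circ)\mathbf{u}_x$ commutes with $\mathbf{u}_t=(X\circ)\mathbf{u}_x$, for the distinguished generic $X$, precisely when $d_\nabla(Y\circ)=0$. First I would make this explicit: imposing $\mathbf{u}_{t\tau}=\mathbf{u}_{\tau t}$, the coefficient of $\mathbf{u}_{xx}$ reproduces $[X\circ,Y\circ]=0$, which holds automatically by commutativity and associativity of $\circ$, while the symmetric part of the coefficient of $\mathbf{u}_x\otimes\mathbf{u}_x$, once the relations carried by $d_\nabla(X\circ)=0$ are subtracted, reduces to $d_\nabla(Y\circ)=0$. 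The decisive point is that the genericity of $X$---the conditions $X^{1(\alpha)}\neq X^{1(\beta)}$ for $\alpha\neq\beta$ and $X^{2(\alpha)}\neq0$ that already force uniqueness in Proposition~\ref{LPVG_Prop3.10}---renders the linear operator relating the two expressions invertible, so that commutativity propagates the $\nabla$-compatibility of $X$ to every frame vector field: $d_\nabla(X_{(i)}\circ)=0$ for all $i$.

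Granting $d_\nabla(X_{(i)}\circ)=0$ for the whole frame, I would extract the two tensorial conditions by elimination. Working in the David--Hertling coordinates of Theorem~\ref{DavidHertlingth}, where the $c^i_{jk}$ are constant, each compatibility becomes a linear relation among the $\nabla_k X_{(i)}^l$; substituting the explicit Christoffel symbols of Remark~\ref{ChristoffelExplicit} and using that the matrix $(X_{(i)}^l)$ is pointwise invertible allows one to strip the frame and read off the symmetry of $\nabla_l c^k_{ij}$. For the curvature condition I would differentiate $d_\nabla(X_{(i)}\circ)=0$ once more and antisymmetrise: the Ricci identity converts the second covariant derivatives of the $X_{(i)}$ into curvature terms, torsion-freeness cancels the spurious contributions, and, the frame spanning the tangent space, the resulting contracted identity is exactly the cyclic combination \eqref{rc-intri}. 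Equivalently, one may observe that \eqref{rc-intri} already follows from the symmetry of $\nabla c$ together with torsion-freeness through the same Ricci-identity manipulation, which is consistent with the asserted equivalence of (i) with (1)--(4).

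I expect the curvature condition \eqref{rc-intri} to be the main obstacle. It is of second order in the frame, whereas the commutativity relations and the symmetry of $\nabla c$ are of first order, so it can be reached only after one further covariant differentiation, at which stage the full strength of torsion-freeness, of the explicit Christoffel symbols of Remark~\ref{ChristoffelExplicit}, and of the genericity of $X$ is needed to annihilate the lower-order remainders. Once both the symmetry of $\nabla c$ and \eqref{rc-intri} are established, every clause of Definition~\ref{Fmnfwcompatconn_flatunit} is satisfied, and $(M,\circ,e,\nabla)$ is an F-manifold with compatible connection and flat unit.
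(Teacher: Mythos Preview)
Your main line of argument is correct and coincides with the paper's: propagate $d_\nabla(X\circ)=0$ to every frame field via the commutativity of the flows and the genericity of $X$, then use the resulting relations together with the fact that $\{X_{(i)}\}$ span the tangent space to extract the tensorial conditions. The paper makes explicit the mechanism you sketch for the curvature step: from $d_\nabla(Z\circ)=0$ one derives, via a Ricci-identity computation,
\[
\bigl(R^k_{lmi}c^h_{pk}+R^k_{lip}c^h_{mk}+R^k_{lpm}c^h_{ik}\bigr)Z^l=0,
\]
and then, since this holds for every $Z\in\{X_{(0)},\dots,X_{(n-1)}\}$ and the frame spans, the bracketed expression itself vanishes, which is \eqref{shc-intro}. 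Your treatment of the symmetry of $\nabla c$ is a bit more detailed than the paper's sketch, but the idea is right: from $d_\nabla(X_{(s)}\circ)=0$ one first obtains $\nabla_k X_{(s)}^i=c^i_{kt}\nabla_e X_{(s)}^t$, and then $(\nabla_j c^i_{kl}-\nabla_k c^i_{jl})X_{(s)}^l=0$ by associativity, so the frame again allows you to strip the contracted index.

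One point, however, is wrong and you should delete it: your closing alternative, that ``\eqref{rc-intri} already follows from the symmetry of $\nabla c$ together with torsion-freeness\ldots\ consistent with the asserted equivalence of (i) with (1)--(4).'' The list (1)--(4) only encodes torsion-freeness together with $R_\lambda=R$; the cyclic curvature identity \eqref{rc-intri} is an \emph{independent} constraint on $R$ (in the semisimple case it is precisely Tsarev's integrability condition) and does not follow from torsion-freeness and symmetry of $\nabla c$ alone. Indeed, if it did, the whole basis argument---which is the actual content of the theorem---would be superfluous. So keep your primary derivation via the Ricci identity and the frame, and drop the ``equivalently'' sentence.
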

\textcolor{white}{...}
\vspace{-2.4em}
\newline
\newline
In particular, the condition $d_\nabla(X\circ)=0$ implies that $d_\nabla(Y\circ)=0$ for any other local vector field $Y$ defining a flow which commutes with the flow defined by $X$ (in particular, $d_\nabla(X\circ)=0$ guarantees that $d_\nabla(Y\circ)=0$ for any other $Y\in\{X_{(0)},\dots,X_{(n-1)}\}$). The proof that an F-manifold with compatible connection and flat unit can be so defined is based on the fact that if $Z$ is a local vector field realising $d_\nabla(Z\circ)=0$ then
\begin{equation}\label{LPVG_eq: 3RCX}
	(R^k_{lmi}c^h_{pk} + R^k_{lip}c^h_{mk} + R^k_{lpm}c^h_{ik})Z^l = 0,\qquad i,p,m,h\inw,
\end{equation}
for any $Z\in\{X_{(0)},\dots,X_{(n-1)}\}$. The assumption of $\{X_{(i)}\}_{i\in\{0,\dots,n-1\}}$ providing a basis allows one to conclude that \eqref{LPVG_eq: 3RCX} must hold for any arbitrary local vector field $Z$, implying \eqref{shc-intro}.
\subsection{Symmetries}\label{SubsectionSymmetries}
We now consider symmetries for a system of hydrodynamic type of the form
\begin{equation*}
	u^i_t=V^i_j(u)u^j_x,\qquad i\inw,
\end{equation*}
defined by a block-diagonal matrix $V=\text{diag}(V_{(1)},...,V_{(r)})$, $r\leq n$, whose generic $\alpha^{\text{th}}$ block, of size $m_\alpha$, is of the lower-triangular Toeplitz form \eqref{toeplitz}. Recall that such a system can be written as
\begin{equation}\label{sys_original_sectionsymms}
	u^i_t=c^i_{jk}X^ju^k_x,\qquad i\inw,
\end{equation}
where $V=X\circ$ with respect to the commutative and associative product $\circ$, defined by the structure constants
\begin{equation*}
	c^{i(\alpha)}_{j(\beta)k(\gamma)}=\delta^\alpha_\beta\delta^\alpha_\gamma\delta^i_{j+k-1},
\end{equation*}
for all $\alpha,\beta,\gamma\in\{1,\dots,r\}$ and $i\in\{1,\dots,m_\alpha\}$, $j\in\{1,\dots,m_\beta\}$, ${k\in\{1,\dots,m_\gamma\}}$, and associated to an F-manifold structure $(\circ,e,\nabla)$ with compatible connection and flat unit. By \emph{symmetry}, we mean a vector field $Y$ defining a flow
\begin{equation}\label{sys_second_sectionsymms}
	u^i_\tau=c^i_{jk}Y^ju^k_x,\qquad i\inw,
\end{equation}
which commutes with \eqref{sys_original_sectionsymms}.
\begin{remark}
	If $d_\nabla(X\circ)=0$, then \eqref{sys_original_sectionsymms} and \eqref{sys_second_sectionsymms} define commuting flows if and only if $d_\nabla(Y\circ)=0$. We refer to \cite{LPVG} for details.
\end{remark}

\section{The generalised hodograph method}\label{SectionGenHodMethod}
\textcolor{white}{...}
\vspace{-2.4em}
\newline
\newline
We consider systems of hydrodynamic type of the form
\begin{equation*}
	\textbf{u}_t=V(\textbf{u})\,\textbf{u}_x,
\end{equation*}
for functions $\textbf{u}(x,t)=(u^1(x,t),...,u^n(x,t))$ taking values in an $n$-dimensional regular F-manifold $(M,\circ,e,E)$, where the $(1,1)$-type tensor $V$ is the operator of multiplication by some vector field $X$ on $M$. Let $r$ be the number of Jordan blocks of $E\circ$, and let $m_\alpha$ denote the size of the $\alpha^{\text{th}}$ block. In canonical coordinates, such a system reads
\begin{equation}\label{ndshtX}
	u^i_t=V^i_ku^k_x=c^i_{jk}X^ju^k_x,\qquad i\inw,
\end{equation}
where
\begin{align}
	c^{i(\alpha)}_{j(\beta)k(\gamma)}=\delta^\alpha_\beta\delta^\alpha_\gamma\delta^i_{j+k-1},
    \label{cijkmulti}
\end{align}
for all $\alpha,\beta,\gamma\in\{1,\dots,r\}$ and $i\in\{1,\dots,m_\alpha\}$, $j\in\{1,\dots,m_\beta\}$, ${k\in\{1,\dots,m_\gamma\}}$. The matrix representing $V$ assumes a block-diagonal form $V=\text{diag}(V_{(1)},...,V_{(r)})$, where the generic $\alpha^{\text{th}}$ block is of the lower-triangular Toeplitz form \eqref{toeplitz}.
\newline
\newline
In order to prove  the first main  result of the paper we need two technical  Lemmas.

\begin{lemma}\label{MToeplitz_implied}
Let $c^i_{jk}$ be the structure functions of an F-manifold. A matrix valued function $M \coloneqq (M^i_j)$ satisfying 
    \begin{equation}
c^i_{js}M_k^s=c^i_{ks}M_j^s,\qquad i,j,k\inw,
\label{eq:cond3temporary}
    \end{equation}
    has the form $M^i_j=c^i_{jk}Y^k$ for some vector $Y$. In particular, in the regular case and in  canonical coordinates,  $M$
  is a block-diagonal lower-triangular matrix of Toeplitz type, namely
	\begin{equation}\label{MToeplitz_gen}
		M^{i(\alpha)}_{j(\beta)}=\delta^\alpha_\beta\,M^{(i-j+1)(\alpha)}_{1(\alpha)}\,\mathbb{1}_{\{i\geq j\}}.
	\end{equation}
\end{lemma}
\begin{proof}
Firstly we observe that any $M$ of the form $M_k^i=c^i_{ks}Y^s$ satisfies $c^i_{js}M_k^s=c^i_{ks}M_j^s$ (by associativity).
\newline
\newline
Multiplying both sides of \eqref{eq:cond3temporary} by $e^j$ we obtain
\[M_k^i=c^i_{ks}Y^s\]  
with $Y^s=M_j^se^j$. For  instance, in  the regular case
\[M_{j(\beta)}^{i(\alpha)}=c^{i(\alpha)}_{j(\beta)s(\sigma)}Y^{s(\sigma)}=\delta^\alpha_\beta  Y^{(i+1-j)(\beta)}\,\mathbb{1}_{\{i\geq j\}}.\]  
For $\alpha\ne\beta$ we have $M_{j(\beta)}^{i(\alpha)}=0$. This means that \[Y^{i(\alpha)}=M_{s(\sigma)}^{i(\alpha)}e^{s(\sigma)}=M_{1(\alpha)}^{i(\alpha)},\]
and  therefore  
\[M_{j(\alpha)}^{i(\alpha)}=Y^{(i+1-j)(\alpha)}\,\mathbb{1}_{\{i\geq j\}}=M_{1(\alpha)}^{(i+1-j)(\alpha)}\,\mathbb{1}_{\{i\geq j\}}.\]	
\end{proof}
\begin{lemma}\label{InverseToeplitz}
Any invertible matrix valued  function of the form
$V^i_j=c^i_{js}X^s$ where $c^i_{js}$ are the structure functions of an F-manifold
has inverse $W^i_j=c^i_{jk}(X^{-1})^k$
where $Y=X^{-1}$ is the  unique solution of the system
$V^i_jY^j=e^i$. In particular, if an invertible matrix valued function is of block-diagonal lower-triangular Toeplitz type, then its inverse is as well.
\end{lemma}

\begin{proof}
The  proof is  a straightforward computation:
\[V^i_mW^m_j=c^i_{ms}X^sc^m_{jk}(X^{-1})^k=c^i_{mj}X^sc^m_{sk}(X^{-1})^k=c^i_{mj}\left(X\circ X^{-1}\right)^m=c^i_{mj}e^m=\delta^i_j.\]
Any matrix valued function of block-diagonal lower-triangular Toeplitz type can be written as $V^i_j=c^i_{js}X^s$, where $c^i_{js}$ are David-Hertling structure constants.  Conversely if  $c^i_{js}$ are David-Hertling structure constants then $V^i_j=c^i_{js}X^s$ has block-diagonal lower-triangular Toeplitz form. In this case,  from the above  computation, it follows that also the inverse $W^i_j=c^i_{jk}(X^{-1})^k$ has block-diagonal lower-triangular Toeplitz form.
\end{proof}

\vspace{.5 cm}
We can now state our first  main result about block diagonal system of hydrodynamic type
\begin{equation}\label{flowX}
{\bf u}_t=X\circ {\bf u}_x
\end{equation}
where $\circ$  defines a regular F-manifold structure and the vector field $X$ satisfies the usual non-vanishing  conditions:
$X^{1(\alpha)}\neq X^{1(\beta)}$ for $\alpha\neq\beta$, $X^{2(\alpha)}\neq0$.	
\begin{theorem}\label{thm_genHodMethod}
Let $\nabla$ be the torsionless connection uniquely defined by the conditions \eqref{cond1} and \eqref{cond2}, and let the vector-valued function ${\textbf{u}(x,t)=(u^1(x,t),...,u^n(x,t))}$ satisfy the algebraic system
\begin{equation}\label{ndhm}
x\,e^{i}+t\,X^{i}(\textbf{u}(x,t))=Y^{i}(\textbf{u}(x,t)),\qquad i\inw,
\end{equation}
where the vector field $Y$ defines a  flow
\begin{equation}\label{flowY}
{\bf u}_\tau=Y\circ {\bf u}_x
\end{equation}
commuting with \eqref{flowX}. Then, $\textbf{u}(x,t)$ is a solution of  the system \eqref{flowX}.
\end{theorem}
\begin{proof} 
By differentiating \eqref{ndhm} with respect to $x$ and $t$ respectively, we  get
\begin{equation}\label{xtder}
e^i+M^i_j(\textbf{u}(x,t))u^j_x=0,\qquad X^i+M^i_j(\textbf{u}(x,t))u^j_t=0,\qquad i\inw,
\end{equation}
where
\begin{equation}\label{TsMatrix}     
M^i_j:=t\,\d_jX^i-\d_jY^i,\qquad i,j\inw.
\end{equation}
The function $\textbf{u}(x,t)$ is  implicitly defined by the algebraic system \eqref{ndhm} and thus we can assume that $M$ is  non-degenerate
  when restricted  to $\textbf{u}(x,t)$.
In the diagonal case, Tsarev's proof revolves around the observation that, on the solution  
 $\textbf{u}(x,t)$ of the algebraic system \eqref{ndhm} defined by the vector field $Y$ which is associated with the symmetries \eqref{flowY} of \eqref{flowX}, the matrix $M$ is diagonal. By using this fact, and taking into account that in the diagonal case $e^i=1$ for each $i\inw$, equations \eqref{xtder} immediately reduce to
\begin{equation}\label{xtderbis}
1+M^i_i
(\textbf{u}(x,t))u^i_x=0,\qquad X^i+M^i_i(\textbf{u}(x,t))u^i_t=0,\qquad i\inw,
\end{equation}
implying the result.
\newline
\newline
It turns out that a similar idea proves  successful in the non-diagonal regular case as well. Firstly, observe that  on the solution  
 $\textbf{u}(x,t)$ of the algebraic system \eqref{ndhm} we have
\begin{equation}\label{TsMatrix_nabla}
M^i_j(\textbf{u}(x,t))=t\,\nabla_jX^i-\nabla_jY^i,\qquad i,j\inw,
\end{equation} 
as
\begin{align*}
	t\,\nabla_jX^i-\nabla_jY^i&=t\,\d_jX^i-\d_jY^i+\Gamma^i_{js}(t\,X^s-Y^s)\\
	&\overset{\eqref{ndhm}}{=}t\,\d_jX^i-\d_jY^i-x\,\Gamma^i_{js}e^s\\
	&\overset{\eqref{cond1}}{=}t\,\d_jX^i-\d_jY^i,
\end{align*}
for all $i,j\inw$. Moreover, by using \eqref{cond2} and 
\begin{equation}\label{symndshtY}
(d_{\nabla}(Y\circ))^i_{jk}=0,\qquad i,j,k\inw.
\end{equation}
and taking into account Remark \ref{rmkSym}, one obtains  the condition \eqref{eq:cond3temporary}. In turn, by Lemma \ref{MToeplitz_implied}, $M(\textbf{u}(x,t))$ has precisely the form $M^i_j(\textbf{u}(x,t))=c^i_{jk}Z^k$  for some vector valued function $Z$ and from Lemma \ref{InverseToeplitz} it follows that $(M(\textbf{u}(x,t))^{-1})^i_j=c^i_{jk}(Z^{-1})^k$.  Using  these  facts, we obtain from \eqref{xtder} 
\[u^k_x=-c^k_{is}(Z^{-1})^se^i=-(Z^{-1})^k,\]
and
\[u^k_t=-c^k_{is}(Z^{-1})^sX^i=c^k_{is}X^iu^s_x.\]
\end{proof}

\begin{remark}
Notice that if we start from an  F-manifold with compatible connection and flat unit $(M,\nabla,\circ,e)$ and  we assume  that the  vector field $X$ defining
 the flow \eqref{flowX} and the vector field $Y$ defining the symmetry \eqref{flowY} satisfy $d_{\nabla}(X\circ)=0$ and  $d_{\nabla}(Y\circ)=0$
 respectively, then the proof of Theorem  \ref{thm_genHodMethod} does not require the regularity assumption.
\end{remark}

\section{Integrability}
\label{section:int}
In this Section we discuss  the  compatibility  of the system for the symmetries 
\begin{equation}\label{hierarchy}
(d_{\nabla}W)^i_{jk}=c^i_{js}\nabla_kY^s-c^i_{ks}\nabla_jY^s=0,\qquad i,j,k\inw,
\end{equation}
of equation \eqref{flowX} where $\nabla$ is the unique torsionless connection determined by $\nabla e=0$ and $d_\nabla(X\circ)=0$ (whose Christoffel symbols are given in Remark \ref{ChristoffelExplicit} in canonical coordinates). By multiplying \eqref{hierarchy} by $e^j$ and summing over $j$, we get
\begin{equation}\label{red-hie}
\nabla_kY^i=c^i_{ks}\nabla_eY^s,\qquad i,k\inw,
\end{equation}
or equivalently
\begin{equation}\label{redhie}
\partial_kY^i=-\Gamma^i_{ks}Y^s+c^i_{ks}\nabla_eY^s\qquad i,k\inw.
\end{equation}

\begin{remark}
Equation \eqref{red-hie} is equivalent to  equation \eqref{hierarchy}. In fact, while it was already shown above that \eqref{hierarchy} implies \eqref{red-hie}, the converse implication can be proved by observing that
\begin{eqnarray*}
	c^i_{js}\nabla_kY^s-c^i_{ks}\nabla_jY^s=(c^i_{js}c^s_{kt}-c^i_{ks}c^s_{jt})\nabla_e Y^t=0,\qquad i,j,k\inw,
\end{eqnarray*}
by means of \eqref{red-hie} and the associativity of the product.
\end{remark}

Let us now compute the compatibility conditions, 
\[(\d_j\d_k-\d_k\d_j)Y^i=0,\]
of the  system \eqref{redhie}. We have
\begin{eqnarray}\notag
	\partial_j\partial_kY^i&=&-(\partial_j\Gamma^i_{ks})Y^s-\Gamma^i_{ks}(\partial_jY^s)+(\partial_jc^i_{ks})\nabla_eY^s+c^i_{ks}(\partial_j\nabla_eY^s)\\
	&=&-(\partial_j\Gamma^i_{ks})Y^s+\Gamma^i_{ks}(\Gamma^s_{jt}Y^t-c^s_{jt}\nabla_eY^t)+(\partial_jc^i_{ks})\nabla_eY^s+c^i_{ks}(\partial_j\nabla_eY^s),\label{Compat_temp_1}
\end{eqnarray}
implying
\begin{align*}
	(\partial_j\partial_k-\partial_k\partial_j)Y^i&=(\partial_k\Gamma^i_{js}-\partial_j\Gamma^i_{ks}+\Gamma^i_{kt}\Gamma^t_{js}-\Gamma^i_{jt}\Gamma^t_{ks})Y^s\\&
	+(\partial_jc^i_{kt}-\partial_kc^i_{jt}+\Gamma^i_{js}c^s_{kt}-\Gamma^i_{ks}c^s_{jt})\nabla_eY^t\\&+c^i_{ks}(\partial_j\nabla_eY^s)-c^i_{js}(\partial_k\nabla_eY^s)\\&=R^i_{sjk}Y^s+(\partial_jc^i_{kt}-\partial_kc^i_{jt}+\Gamma^i_{js}c^s_{kt}-\Gamma^i_{ks}c^s_{jt})\nabla_eY^t\\
	&+c^i_{ks}(\partial_j\nabla_eY^s)-c^i_{js}(\partial_k\nabla_eY^s).
\end{align*}
Taking into account the symmetry of $\nabla  c$, we obtain
\begin{align*}
	(\partial_j\partial_k-\partial_k\partial_j)Y^i&=R^i_{sjk}Y^s+(\Gamma^s_{jt}c^i_{ks}-\Gamma^s_{kt}c^i_{js})\nabla_eY^t+c^i_{ks}(\partial_j\nabla_eY^s)-c^i_{js}(\partial_k\nabla_eY^s)\\
	&=R^i_{sjk}Y^s+c^i_{ks}(\Gamma^s_{jt}\nabla_eY^t+\partial_j\nabla_eY^s)-c^i_{js}(\Gamma^s_{kt}\nabla_eY^t+\partial_k\nabla_eY^s)\\
	&=R^i_{sjk}Y^s+c^i_{ks}\nabla_j\nabla_eY^s-c^i_{js}\nabla_k\nabla_eY^s\\
	&=R^i_{sjk}Y^s+c^i_{ks}e^m\nabla_j\nabla_mY^s-c^i_{js}e^m\nabla_k\nabla_mY^s
\end{align*}
which, since $\nabla_j\nabla_mY^s=\nabla_m\nabla_jY^s+R^s_{tmj}Y^t$, becomes
\begin{align*}
	(\partial_j\partial_k-\partial_k\partial_j)Y^i&=R^i_{sjk}Y^s+c^i_{ks}e^mR^s_{tmj}Y^t-c^i_{js}e^mR^s_{tmk}Y^t\\&+c^i_{ks}e^m\nabla_m\nabla_jY^s-c^i_{js}e^m\nabla_m\nabla_kY^s.
\end{align*}
In particular, by \eqref{redhie}, $\nabla_jY^s=c^s_{jt}\nabla_eY^t$ and $\nabla_kY^s=c^s_{kt}\nabla_eY^t$, giving
\begin{align*}
	c^i_{ks}e^m\nabla_m\nabla_jY^s-c^i_{js}e^m\nabla_m\nabla_kY^s&=c^i_{ks}e^m\nabla_m(c^s_{jt}\nabla_eY^t)-c^i_{js}e^m\nabla_m(c^s_{kt}\nabla_eY^t)\\
&=c^i_{ks}(e^m\nabla_mc^s_{jt})\nabla_eY^t-c^i_{js}(e^m\nabla_mc^s_{kt})\nabla_eY^t\\
	&=c^i_{ks}(\nabla_ec^s_{jt})\nabla_eY^t-c^i_{js}(\nabla_ec^s_{kt})\nabla_eY^t,
\end{align*}
due to the associativity of the product. Since $\nabla_ec^s_{jt}=0$ as a consequence of \eqref{HMeq1free} (see \cite{LPVG} for details), we get
\begin{align*}
	c^i_{ks}e^m\nabla_m\nabla_jY^s-c^i_{js}e^m\nabla_m\nabla_kY^s&=0,
\end{align*}
yielding
\begin{align}\label{Compat_temp_2}
	(\partial_j\partial_k-\partial_k\partial_j)X^i&=R^i_{sjk}Y^s+c^i_{ks}e^mR^s_{tmj}Y^t-c^i_{js}e^mR^s_{tmk}Y^t\\
	&=(R^i_{tjk}+c^i_{ks}R^s_{tmj}e^m+c^i_{js}R^s_{tkm}e^m)Y^t\notag\\
	&=[(c^i_{sm}R^s_{tjk}+c^i_{ks}R^s_{tmj}+c^i_{js}R^s_{tkm})e^m]Y^t.\notag
\end{align}
Thus, the compatibility conditions follow from
\begin{equation}\label{red-rc}
	\left(R^s_{lmi}c^j_{ks}+R^s_{lik}c^j_{ms}+R^s_{lkm}c^j_{is}\right)e^m=0.
\end{equation}

\begin{remark}\label{RemarkCompatVS3RC}
	Condition \eqref{red-rc} amounts to condition \eqref{rc-intri-2}, or equivalently to its local formulations \eqref{shc-intro}. In fact, while recovering \eqref{red-rc} from \eqref{shc-intro} is immediate, the opposite implication can be proved by first observing that \eqref{red-rc} can be written as
	\[R^i_{tjk}+c^i_{ks}R^s_{tmj}e^m+c^i_{js}R^s_{tkm}e^m=0\]
	and by then noticing that this implies that
	\begin{align*}
		&c^i_{sm}R^s_{tjk}+c^i_{ks}R^s_{tmj}+c^i_{js}R^s_{tkm}
		\\
		&=c^i_{ms}(c^s_{kh}R^h_{tjr}-c^s_{jh}R^h_{tkr})e^r+c^i_{ks}(c^s_{jh}R^h_{tmr}-c^s_{mh}R^h_{tjr})e^r+c^i_{js}(c^s_{mh}R^h_{tkr}-c^s_{kh}R^h_{tmr})e^r\\
		&=R^h_{tjr}e^r(c^i_{ms}c^s_{kh}-c^i_{ks}c^s_{mh})
		+R^h_{tkr}e^r(c^i_{js}c^s_{mh}-c^i_{ms}c^s_{jh})+R^h_{tmr}e^r(c^i_{ks}c^s_{jh}-c^i_{js}c^s_{kh})=0,
	\end{align*}
	due to the associativity of the product. Thus, the condition of compatibility of the system \eqref{redhie} is equivalent to the condition \eqref{rc-intri-2}.
\end{remark}
In the semisimple case, the system \eqref{red-hie} for $Y$ becomes
\begin{equation*}
	\nabla_kY^i=\delta^i_{k}\nabla_eY^i,\qquad i,k\inw,
\end{equation*}
which gives
\beq\label{red-hie-ss}
\nabla_k Y^i=0,\qquad k\ne i,
\eeq
or, by taking into account that $\Gamma^i_{jk}=0$ for pairwise distinct indices and $\Gamma^i_{jj}=-\Gamma^i_{ij}$ for $i\ne j$,
\begin{equation}\label{redhie-ss}
	\partial_kY^i=-\Gamma^i_{ks}Y^s=\Gamma^i_{ik}(Y^k-Y^i),\qquad k\ne i.
\end{equation}
Notice that for $k=i$, due to \eqref{red-hie-ss}, the two sides trivially coincide. In particular, the system \eqref{redhie-ss} can be explicitly written in terms of $X$, as
\begin{equation}\label{redhie-ss_X}
	\partial_kY^i=\frac{Y^k-Y^i}{X^k-X^i}\,\partial_kX^i,\qquad k\ne i.
\end{equation}
Due to Theorem III in \cite{darboux}, the compatibility of the system \eqref{redhie-ss} implies that  the general solution of \eqref{redhie-ss}  depends on $n$  arbitrary  functions  of a single variable. Tsarev proved that this  family of solutions allows one to solve the Cauchy problem for  the system defined by $Y$  (Proposition 1 in \cite{ts91}).
\newline
\newline
In the non-diagonalisable setting, this is no longer the case. For instance, when the operator of multiplication by the Euler vector field is associated with a single Jordan block, the system \eqref{red-hie} for $Y$ reduces to an identity for $k=1$, while for $k\ne 1$ it reduces to
\begin{equation}\label{red-hie-nss}
	\partial_kY^i=-\Gamma^i_{ks}Y^s+c^i_{ks}\partial_1Y^s,
\end{equation}
since $\Gamma^s_{1t}=0$ by the flatness of the unit vector  field. In this case, due to the presence of the partial derivatives $\partial_1Y^s$, we cannot straightforwardly apply Darboux's Theorem III. Nevertheless, in Section \ref{section:completeness}, we will show that, under suitable assumptions on $X$, such a system can be carefully recomposed into a family of subsystems, to each of which Darboux's Theorem III can be applied. When this is the case, in the generic regular setting, the general solution to the system for the symmetries depends, as in the semisimple case, on $n$ arbitrary functions of a single variable. Unlike the semisimple setting, however, such arbitrary functions
\begin{equation}\label{narbryfcts_regular}
	\big\{\{f_{i(\alpha)}(u^{1(\alpha)})\}_{i\in\{1,\dots,m_\alpha\}}\big\}_{\alpha\in\{1,\dots,r\}},
\end{equation}
only depend on the main variable of each block. This behaviour lead us to formulate the following.
\begin{definition}
	We say that the set of symmetries for the system \eqref{sys_original_sectionsymms} is \emph{complete} if the generic solution to $d_\nabla(Y\circ)=0$ depends on $n$ arbitrary functions of a single variable, of the form \eqref{narbryfcts_regular}.
\end{definition}

\section{Completeness of symmetries}
\label{section:completeness}
\textcolor{white}{...}
\vspace{-2.4em}
\newline
\newline
The present section is devoted to providing necessary and sufficient conditions for the applicability of Darboux's Theorem III  to the linear systems for the symmetries.	This leads naturally to the notion of completeness of the set of symmetries. As pointed out above, such symmetries are determined as solutions to $d_\nabla(Y\circ)=0$, where $\nabla$ is uniquely expressed in terms of $X$. 
\newline
\newline
We first observe some properties of $\nabla$, involving its Christoffel symbols $\{\Gamma^i_{jk}\}$ in canonical coordinates. In particular, we notice that some vanish by construction. Then, we prove some technical results which allows one to phrase the vanishing of an additional subset of Christoffel symbols in terms of the requirement for $X$ to be independent of some canonical coordinates. More precisely, we get the condition $\partial_{i(\beta)}X^{j(\alpha)}=0$ for $i>j$. We show that the vanishing of such an additional subset of Christoffel symbols, or equivalently the independence of $X$ on some canonical coordinates, is a necessary condition in order for Darboux's theorem to be applied to the system for the symmetries. Finally, we prove that the assumption that $X$ does not depend on such subset of canonical coordinates is also sufficient to apply Darboux's theorem to the system for the symmetries. Under such an assumption, since the system for the symmetries is compatible, the set  of symmetries is complete. In this setting,  the symmetries exist and are determined in terms of $n$ functions of a single variable, in analogy with the semisimple case, with these variables being the main variables of the blocks, $\{u^{1(\alpha)}\}_{\alpha\in\{1,\dots,r\}}$. 
\newline
\newline
For the convenience of the reader, we split the discussion into two subsections, starting with $V$ consisting of a single Jordan block and then treating the case of an arbitrary number of Jordan blocks. First, however, we prove three useful lemmas. 
\begin{lemma}
Let $\alpha \neq \beta$. Then,
\begin{subequations}
\begin{equation}
  \hspace{-4.5em}  \Gamma^{i(\alpha)}_{j(\beta) k(\alpha)} = \begin{cases}
        \Gamma^{(i-k+1)(\alpha)}_{j(\beta) 1(\alpha)}, & \text{if } i \geq k,\\
        0,& \text{otherwise};
    \end{cases}
    \label{lemma91a}
    \end{equation}
    \begin{equation}
        \Gamma^{i(\alpha)}_{j(\beta)k(\beta)} = \begin{cases}
            \Gamma^{i(\alpha)}_{(j+k-1)(\beta) 1(\beta)}, & \text{if } j+k \leq m_{\beta} + 1\\
            0, & \text{otherwise}.
        \end{cases}
        \label{lemma91b}
    \end{equation}
\end{subequations}
    \label{lemma:9.1}
\end{lemma}

\begin{proof}
    Lemma \ref{lemma:9.1} follows as a consequence of the symmetry of $(\nabla c)^m_{ijk}$ in $i,j$, which reads
    \begin{equation}
        \Gamma^{m(\nu)}_{i(\alpha) s(\sigma)} \delta^{\sigma}_{\beta} \delta^{\sigma}_{\gamma} \delta^{s}_{j+k-1} - \Gamma^{s(\sigma)}_{i(\alpha) k(\gamma)} \delta^{\nu}_{\beta} \delta^{\nu}_{\sigma} \delta^{m}_{j+s-1} - \Gamma^{m(\nu)}_{j(\beta) s(\sigma)} \delta^{\sigma}_{\alpha} \delta^{\sigma}_{\gamma} \delta^{s}_{i+k-1} + \Gamma^{s(\sigma)}_{j(\beta) k(\gamma)} \delta^{\nu}_{\alpha} \delta^{\nu}_{\sigma} \delta^{m}_{i+s-1} = 0,
        \label{nablacsymcoords}
    \end{equation}
    by using \eqref{cijkmulti}. 
    \newline
    \newline
Firstly, let $\alpha = \gamma = \nu \neq \beta$, and let $k=1$. Then \eqref{nablacsymcoords} reads
\begin{equation*}
    \Gamma^{m(\alpha)}_{j(\beta) i(\alpha)} = \Gamma^{(m-i+1)(\alpha)}_{j(\beta) 1(\alpha)}.
\end{equation*}
If $m < i$ the right-hand side is undefined and we get 
\begin{equation*}
    \Gamma^{m(\alpha)}_{i(\alpha) j(\beta)} = 0, 
\end{equation*}
proving \eqref{lemma91a}. 
\newline
\newline
    Now, let $\alpha = \beta = \gamma \neq \nu$, and let $i=1$. Then \eqref{nablacsymcoords} reads
    \begin{equation*}
        \Gamma^{m(\nu)}_{1(\alpha) (j+k-1)(\alpha)} = \Gamma^{m(\nu)}_{j(\alpha) k(\alpha)}.
    \end{equation*}
    If $j+k-1>m_{\alpha}$ the left-hand side is undefined and we get
    \begin{equation*}
        \Gamma^{m(\nu)}_{j(\alpha) k(\alpha)} = 0,
    \end{equation*}
    proving \eqref{lemma91b}.
\end{proof}

\begin{lemma}
The following are equivalent:
\begin{subequations}
    \begin{equation}
       \hspace{-3em} \Gamma^{i(\alpha)}_{j(\alpha) k(\alpha)} = 0, \qquad \text{for }   k \geq 2, \,  j\geq i+1;
        \label{lemma92a}
    \end{equation}
    \begin{equation}
        \Gamma^{i(\alpha)}_{j(\alpha)k(\alpha)} = 0, \qquad \text{for } j,k \geq 2, \, i-j-k \leq -3.
        \label{lemma92b}
    \end{equation}
\end{subequations}
    \label{lemma:9.2}
\end{lemma}

\begin{proof}
	Assume that \eqref{lemma92b} holds. Since for $j\geq i+1$ we have $j\geq2$, then
	\begin{equation*}
		k\geq2\,\implies\,i-j-k\leq-3.
	\end{equation*}
	Thus \eqref{lemma92a} holds.
   \newline
   \newline
   For the converse direction, let us assume \eqref{lemma92a}. We proceed by induction over the upper index. Firstly, suppose $i=1$, then the conditions in \eqref{lemma92a} becomes $k\geq 2$, $j\geq i+1 = 2$, and the non-trivial condition in \eqref{lemma92b} reads $i-j-k \leq -3 \iff j+k \geq 4$, which is equivalent to $j,k \geq 2$. Thus $\eqref{lemma92a}$ and \eqref{lemma92b} are equivalent for $i=1$. Now, we want to show that  $\Gamma^{m(\alpha)}_{j(\alpha)h(\alpha)} = 0$ for  $m-h-j \leq -3$ assuming $\Gamma^{t(\alpha)}_{s(\alpha)r(\alpha)} = 0$ for  $t-s-r = m-h-j \leq -3$ with $t<m$. We do so by considering \eqref{nablacsymcoords} for $\alpha = \beta = \gamma = \nu$, with $i=2$, $j,k \geq 2$, and substitute $h \coloneqq k+1$. This gives
   \begin{equation}
       \Gamma^{m(\alpha)}_{j(\alpha)h(\alpha)} = \Gamma^{m(\alpha)}_{2(\alpha) (j+h-2)(\alpha)} - \Gamma^{(m-j+1)(\alpha)}_{2(\alpha)(h-1)(\alpha)} + \Gamma^{(m-1)(\alpha)}_{j(\alpha)(h-1)(\alpha)}.
       \label{eq:lemmaeqtemp}
   \end{equation}
   Letting $m-j-h \leq -3$, we have that the second and third terms of the right-hand side of \eqref{eq:lemmaeqtemp} vanish by the induction hypothesis, and we are left with
    \begin{equation}
       \Gamma^{m(\alpha)}_{j(\alpha) h(\alpha)} = \Gamma^{m(\alpha)}_{2(\alpha) (j+h-2)(\alpha)}. 
        \label{eq:lemmaeqtemp2}
    \end{equation}
    Finally notice that, since $m+1 \leq j+h-2 \iff m-j-h \leq -3$, the right hand side of \eqref{eq:lemmaeqtemp2} vanishes by \eqref{lemma92a} which proves  \eqref{lemma92b}.
\end{proof}

\begin{lemma}
Let $\alpha \neq \beta$.  Then the following are equivalent:
\begin{subequations}
    \begin{equation}
        \Gamma^{i(\alpha)}_{j(\beta)s(\beta)} = 0, \qquad \text{for }  s>i;
        \label{lemma93a}
    \end{equation}
    \begin{equation}
\Gamma^{i(\alpha)}_{1(\beta) k(\beta)} = 0, \qquad \text{for }  k>i;
        \label{lemma93b}
    \end{equation}
	\begin{equation}
		\Gamma^{i(\alpha)}_{1(\alpha) k(\beta)} = 0, \qquad \text{for }  k>i.
		\label{lemma93c}
	\end{equation}
\end{subequations}
    \label{lemma:9.3}
\end{lemma}

\begin{proof}
	Let us first observe that \eqref{lemma93a} implies \eqref{lemma93b}. As for the converse implication, let us assume \eqref{lemma93b}. For $s>i$ and any $j\geq1$ we have
	\begin{align*}
		\Gamma^{i(\alpha)}_{j(\beta)s(\beta)}&\overset{\eqref{lemma91b}}{=}\Gamma^{i(\alpha)}_{1(\beta)(j+s-1)(\beta)}\overset{\eqref{lemma93b}}{=}0.
	\end{align*}
	Finally, by \eqref{Chr_nablae_1}, we have 
   	\begin{equation*}
       \Gamma^{i(\alpha)}_{1(\beta)k(\beta)} = -\Gamma^{i(\alpha)}_{k(\beta) 1(\alpha)},
   	\end{equation*}
   	which implies the equivalence with \eqref{lemma93c}. 
\end{proof}

\subsection{One Jordan block}
Let $V$, as in \eqref{SHT-intro}, be given by
    \begin{equation}
	V=
	\begin{bmatrix}
		X^{1} & 0 & \dots & 0\cr
		X^{2} & X^{1} & \dots & 0\cr
		\vdots & \ddots & \ddots & \vdots\cr
		X^{n} & \dots & X^{2}&  X^{1}
        \label{Vfor1block}
	\end{bmatrix}.
\end{equation}
In this subsection we show that the set of symmetries is complete if $X^i$, for any $i \in \{1, \cdots, n\}$, is a function of $u^1, \cdots, u^i$ only  and that such a requirement is necessary for the applicability of Darboux's Theorem III. This is accomplished by first showing that $X^i$ depends only on $u^1, \cdots, u^i$ if and only if $\Gamma^{i}_{jk}$ is zero whenever $i-j-k \leq -3$.

\begin{lemma}
	The following  are equivalent:
    \begin{subequations}
        \begin{equation}
            \partial_jX^i=0, \quad \forall  i,j\in \{1, \cdots, n\}\,\,\text{such that}\,\,j>i;
            \label{1blockpartialvanish}
        \end{equation}
        \begin{equation}
            \Gamma^i_{jk}=0, \quad\forall i,j,k\in \{1, \cdots, n\}\,\,\text{such that}\,\,i-j-k \leq -3.
            \label{1blockgammavanish}
        \end{equation}
    \end{subequations}		
    \label{prop:1blockgammapartial}
	\end{lemma}

\vspace{-0.8em}
    
\begin{proof}
       Let us first show that \eqref{1blockgammavanish} implies \eqref{1blockpartialvanish}. Considering \eqref{redhie} for $X$, and relabelling $i \leftrightarrow k$, for $k<i$, gives
\begin{equation*}
  	\partial_i X^{k}  = - \sum_{l=2}^{n} \Gamma^k_{il} X^{l}
  	\label{eq:1blockdnabla1}
\end{equation*}
which vanishes due to \eqref{1blockgammavanish}  as $k-i-l \leq  k-i-2 \leq -3$ since $k<i \implies k-i \leq -1$. 
  	\newline
    \newline
   We now prove the converse direction. That is,  we assume that the $i^{\text{th}}$ component of $X$ depends on the first $i$ canonical coordinates only, and show that we must have $\Gamma^{k}_{i l} = 0$ for $k-i-l \leq -3$. 
        By \eqref{Chr_d} we have
        \begin{equation}
            \Gamma^{i}_{jk} = \dfrac{1}{X^2}\bigg(\partial_{j-1}X^{i-k+1} - \partial_k X^{i-j+2} + \sum_{s=k+1}^{n} \Gamma^{i}_{j-1,s} X^{s-k+1} - \sum_{j+1}^{n} \Gamma^{i}_{ks} X^{s-j+2} \bigg).
            \label{eq:gamma1block}
        \end{equation}
        Notice that if $n=1$ we cannot have $i-j-k \leq -3$. Moreover, if $n=2$, the only possibility is $i=1, j=k=2$. However, by \eqref{eq:gamma1block}, $\Gamma^{1}_{22}$ is proportional to $\partial_2 X^1$ which is zero by assumption. Thus we may assume that $n>2$. 
 Since setting either $j$ or $k$ to 1 gives zero by $\nabla e = 0$, it is sufficient to prove \eqref{1blockgammavanish} for $j,k \geq 2$. We shall now prove the proposition by showing that $\Gamma^{i}_{jk} = 0$ whenever $j\geq 2$, and $i<k$ which is equivalent by Lemma \ref{lemma:9.2}. We continue the proof by induction over the lower indices, and without loss of generality we assume that $j \leq k$. Notice that the first two terms in \eqref{eq:gamma1block} vanish as $X^{i-k+1}$ is undefined and $i-j+2<k-j+2 \leq k$ since $j \geq 2$. Hence, 
        \begin{equation}
            \Gamma^{i}_{jk} = \dfrac{1}{X^2}\bigg(\sum_{s=k+1}^{n}\Gamma^{i}_{j-1,s} X^{s-k+1} - \sum_{s=j+1}^{n}\Gamma^{i}_{ks} X^{s-j+2} \bigg).
            \label{eq:1blockgammaexpr}
        \end{equation}
        By letting $j = k = n$, we directly obtain zero by \eqref{eq:1blockgammaexpr}, as the sums are undefined. Thus, let $k=n$ and fix $j \in \{2, \cdots, n\}$, and assume that $\Gamma^i_{hn}=0$, $i<h$, for $h>j$. By \eqref{eq:1blockgammaexpr} we have
        \begin{equation*}
        	\Gamma^i_{j n} = -\dfrac{1}{X^2}\left(\sum_{s=j+1}^{n} \Gamma^i_{ks}  X^{s-j+2}\right),
        \end{equation*}
        which equals zero as each term in the sum vanishes by the inductive assumption. 
        \newline
        \newline
        Now let us fix $k \in \{2, \cdots, n\}$, and assume that $\Gamma^i_{jh} = 0$ for $i<h$, for each $h>k$. By this assumption, \eqref{eq:1blockgammaexpr} reads 
        \begin{equation}
            \Gamma^i_{jk} = - \dfrac{1}{X^2}\sum_{s=j+1}^{n}\Gamma^i_{ks}X^{s-j+2}.
            \label{eq:indtemp1block}
        \end{equation}
        Finally we perform strong induction over $j$ with base $j=k$. From \eqref{eq:indtemp1block} it is clear that setting $j=k$ gives 0 by the above, as we already assumed that $\Gamma^i_{k s} = 0$ for $s>k$. Thus let us fix $j \in \{2, \cdots, k-1\}$ and assume that $\Gamma^i_{hk} = 0$ for $h>j$.  Then, the terms in the sum on the right-hand side of \eqref{eq:indtemp1block} vanish by the inductive assumption, which concludes the proof.
\end{proof}

\begin{lemma}
    Let $j, k \in \{1, \cdots, n\}$. If  $X^i$ is a function of $u^1, \cdots, u^i$ only, for all $i \in \{1, \cdots, n\}$, then   $\Gamma^i_{jk}$ is a function of $u^1, \cdots, u^i$ only, for any $i \in \{1, \cdots, n\}$.  
    \label{lemma:1blockgammadependence}
\end{lemma}
\begin{proof}
Without loss of generality we assume that $j \leq k$. By Lemma \ref{prop:1blockgammapartial}, it is sufficient to consider $\Gamma^i_{jk}$ when $i-j-k \geq -2$, or equivalently $i\geq j, k$ with $j,k \geq 2$ by Lemma \ref{lemma:9.2}. Notice that since $i-k+1 < i$, and $i-j+2 \leq i$, we already have that the first and the third term in \eqref{Chr_d} satisfy the required property. Thus, it remains to consider the $u-$dependence of
\begin{equation}
    \sum_{s=k+1}^{i} \Gamma^{i}_{j-1,s} X^{s-k+1} - \sum_{s=j+1}^{i} \Gamma^{i}_{ks} X^{s-j+2}.  
    \label{eq:1blockint1}
\end{equation}
Furthermore, by Lemma \ref{prop:1blockgammapartial}, $s\leq i-j+3\leq i+1$ in the first sum, implying $s-k+1\leq i-k+2\leq i$, while $s\leq i-k+2\leq i$ in the second sum, implying $s-j+2\leq i-j+2\leq i$. Thus, we only need to prove the required property for the Christoffel symbols involved in \eqref{eq:1blockint1}. Let us proceed by induction over $j+k$. Let $j+k=2n$, which implies $j=k=n$. In this case both the sums in \eqref{eq:1blockint1} are undefined and we get the required result. Thus, assume the statement holds true for any sum of lower indices strictly greater than $j+k$. Here, the only term not directly satisfying the criteria by the inductive assumption is $\Gamma^i_{j-1,k+1}$.  
\newline
\newline
We have now proved that the $u$-dependence of $\Gamma^{i}_{jk}$ follows from the $u$-dependence of $\Gamma^{i}_{j-1, k+1}$. However, this allows us to apply the same argument to $\Gamma^i_{j-1, k+1}$ and so on. There are two possible end-points to this path. The first, which is valid for $k+j-1 \leq n$, gives  $\Gamma^i_{1, k+j-1}$, which is zero by $\nabla e = 0$, and thus trivially satisfies the statement. The second, which is valid for $k+j-1>n$ gives $\Gamma^i_{j-n+k, n}$. This is zero by Lemma \ref{lemma:9.2} unless $i=n$. However, if $i=n$, Lemma \ref{lemma:1blockgammadependence} is automatically satisfied. This concludes the proof.
\end{proof}

\begin{remark}\label{Remark_GammaDependenceMakesCompatChainClosed1block}
	Under the assumption that $\partial_k X^i = 0$ for $k>i$, the formulas for compatibility  \eqref{Compat_temp_1}, \eqref{Compat_temp_2} depend on $u^1, \dots, u^{i}$ only. By using Lemma \ref{prop:1blockgammapartial}, \eqref{Compat_temp_1} gives
    \begin{equation*}
        \partial_j \partial_k Y^i =  -\sum_{s=1}^{i}(\partial_j\Gamma^i_{ks})Y^s - \sum_{s=1}^{i}\Gamma^i_{ks}\partial_j Y^s + \partial_j \nabla_e Y^{i-k+1}, 
    \end{equation*}
    and so the statement follows from the $u$-dependence of the Christoffel symbols (Lemma \ref{lemma:1blockgammadependence}).
    In fact, $R^{i}_{s j k}$ vanishes whenever $s\geq i+1$, and when non-vanishing it depends on $u^1, \dots, u^{i}$ only. This can be seen as follows. The Riemann tensor is given by
    \begin{equation*}
        R^{i}_{sjk} = \partial_k \Gamma^i_{js}-\partial_j \Gamma^i_{ks} + \Gamma^i_{kt} \Gamma^t_{js}- \Gamma^i_{jt} \Gamma^t_{ks}.
    \end{equation*}
    Let $s>i$, then, 
    \begin{itemize}
        \item the first two terms vanish by Lemmas \ref{lemma:9.2}, \ref{prop:1blockgammapartial};
        \item for the third or fourth term  not to vanish we must have $i \geq t$ and $t \geq s$, but this implies $i \geq s$ which is false by assumption. 
    \end{itemize}
    Thus, $R^i_{sjk} = 0$ for any $i,j,k \in \{1, \cdots, n\}$ and $s \in \{i+1, \cdots, n\}$. For $s \leq i$ it is clear from the $u-$dependence of the Christoffel symbols (Lemma \ref{lemma:1blockgammadependence}) that the $u$-dependence of the Riemann tensor is as stated, since the last two terms vanish whenever $t > i$ by Lemmas  \ref{lemma:9.2}, \ref{prop:1blockgammapartial}. 
\end{remark}

We have already pointed out that  in general it  is  not  possible  to apply  Darboux's theorem  to the  system for  the symmetries \eqref{redhie} 
 due  to the presence of partial derivatives of the unknown functions on the  right hand side of the system.  However, we observe that Darboux's theorem can be applied to  the subsystem for $Y^1$ provided  that it is closed, in the sense that it only depends on the arguments of $Y^1$ (in this case, solely on $u^1$) and does not include any of the unknown elements $\{Y^{s}\,|\,s>1\}$. Assuming that the system \eqref{redhie} admits $n$  linearly independent solutions $\{Y_A\}_{A\inw}$, this implies some additional conditions on the Christoffel symbols appearing on the right hand side of \eqref{redhie}. If these conditions are satisfied, the function $Y^1$ is determined up to an arbitrary function of $u^1$. At this stage, the right hand side of the subsystem for $Y^2$ does not contain any partial  derivatives of unknown functions and, reasoning in a similar manner, it can be determined up to an arbitrary function of $u^1$. The same can  be done for all the remaining components of $Y$ leading to the  following results.

\begin{proposition}
   Let $i,j, k \in \{1, \cdots, n\}$. In order to apply Darboux's theorem to the system of symmetries \eqref{redhie}, we must have that  $\Gamma^i_{jk} = 0$ whenever $i-j-k \leq -3$.
   \label{prop:1blockcompletemeansgamma}
\end{proposition}
\vspace{-0.8em}
\begin{proof}
    In the  case of  a single Jordan  block the system \eqref{redhie} reads
    \begin{equation}
        \partial_k Y^i=-\Gamma^i_{ks}Y^s+c^i_{ks}\nabla_eY^s = \partial_1 Y^{i-k+1} - \Gamma^i_{ks} Y^s,\qquad k\geq2.
        \label{eq:1blocksymsyst}
    \end{equation}
      In particular, the subsystem for $i=1$ reads
    \begin{equation}
        \partial_k Y^1 = -\sum_{s=2}^{n}\Gamma^1_{ks}Y^s, \qquad k \geq 2.
        \label{eq:oneblocki1}
    \end{equation}
    Since we want  that the subsystem is closed, the components $Y^s$ for $s>1$ cannot be present in the right-hand side of \eqref{eq:oneblocki1}. By linear independence of the elements of $\{Y_A\}_{A\inw}$, we must have that
    \begin{equation*}
        \Gamma^1_{ks} = 0, \qquad s \geq 2,
    \end{equation*}
    which by Lemma \ref{lemma:9.2} is the desired result for $i=1$. Analogously, as we increase $i$ step by step,  it is clear from \eqref{eq:1blocksymsyst} that in order to have a closed subsystem, we must impose that
    \begin{equation*}
      \sum_{s=2}^n   \Gamma^i_{ks}  Y^s
    \end{equation*}
    does not depend on $Y^s$ for $s>i$. However, by linear independence of the elements of $\{Y_A\}_{A\inw}$, the only way for this to occur is that
    \begin{equation*}
        \sum_{s=i+1}^{n} \Gamma^i_{ks}Y^s = 0 \implies   \Gamma^i_{ks} = 0, \quad \text{for } s>i.
    \end{equation*}
    Thus, by Lemma \ref{lemma:9.2}, we obtained the statement. 
\end{proof}

\begin{proposition}
Let $i,j \in \{1, \cdots, n\}$. If $\partial_i X^j = 0$ for $i>j$, then the set of symmetries is complete. 
\label{prop:1blockgammameanscomplete}
\end{proposition}

\begin{proof}
    The system for the symmetries \eqref{redhie} reads 
\begin{equation*}
	\partial_{k}Y^{i}=\partial_{1}Y^{i-k+1}-\sum^n_{s=2}\,\Gamma^{i}_{ks}\,Y^{s},\qquad i\inw,\,k\in\{2,\dots,n\},
\end{equation*}
which, by Lemma \ref{prop:1blockgammapartial}, amounts to
\begin{equation}\label{EqSymm_sameblock_J1}
	\partial_{k}Y^{i}=\partial_{1}Y^{i-k+1}-\sum^{i-k+2}_{s=2}\,\Gamma^{i}_{ks}\,Y^{s},\qquad i\inw,\,k\in\{2,\dots,n\}.
\end{equation}
In particular, we have
\begin{equation*}\label{EqSymm_sameblock_J1_kgreateri}
	\partial_{k}Y^{i}=0,\qquad k>i.
\end{equation*}
In turn, the system \eqref{EqSymm_sameblock_J1} for $\{Y^i(u^1,\dots,u^n)\}_{i\inw}$ can be rewritten as $n$ systems
\begin{equation}\label{EqSymm_sameblock_J1_kleqi}
	\Bigg\{\partial_{k}Y^{i}=\partial_{1}Y^{i-k+1}-\sum^{i-k+2}_{s=2}\,\Gamma^{i}_{ks}\,Y^{s},\qquad k\in\{2,\dots,i\}\Bigg\}_{i\inw},
\end{equation}
for the functions $\{Y^i(u^1,\dots,u^i)\}_{i\inw}$. When looking at each of these systems separately, Darboux's Theorem III from \cite{darboux} can indeed be applied. More precisely, let us start by observing that for $i=1$ the system \eqref{EqSymm_sameblock_J1_kleqi} is trivially satisfied, determining $Y^1(u^1)$ as an arbitrary function of $u^1$. As for $i=2$, \eqref{EqSymm_sameblock_J1_kleqi} gives
\begin{equation*}\label{EqSymm_sameblock_J1_kleqi_i2}
	\partial_{2}Y^{2}=\partial_{1}Y^{1}-\Gamma^{2}_{22}\,Y^{2},
\end{equation*}
the right-hand side of which only involves the unknown function $Y^2$ and the known quantities $\Gamma^2_{22}$, $\partial_1Y^1$, where $\Gamma^2_{22}$ only depends on $u^1,u^2$ by Lemma \ref{lemma:1blockgammadependence}. It follows that $Y^2(u^1,u^2)$ is uniquely determined, having fixed $Y^1(u^1)$, up to another arbitrary function of $u^1$. Let us fix some $h\leq n$ and inductively assume that, for each $i\leq h-1$, the function $Y^i(u^1,\dots,u^i)$ is uniquely determined up to $i$ arbitrary functions of $u^1$. For $i=h$, the system \eqref{EqSymm_sameblock_J1_kleqi} reads
\begin{equation*}
	\partial_{k}Y^{h}=\partial_{1}Y^{h-k+1}-\sum^{h-k+2}_{s=2}\,\Gamma^{h}_{ks}\,Y^{s},\qquad k\in\{2,\dots,h\},
\end{equation*}
the right-hand side of which only involve the unknown function $Y^h$ and the known quantities $\Gamma^h_{ks}$ for $s\in\{2,\dots,h-k+2\}$, $k\in\{2,\dots,h\}$, $Y^j$ for $j\in\{2,\dots,h-1\}$, and $\partial_1Y^t$ with $t\in\{1,\dots,h-1\}$. Thus, the system for $\{\partial_kY^h\}_{k\in\{2,\dots,h\}}$ only contains the function $Y^h$ itself and known quantities, i.e. the functions $\{Y^j\}_{j\in\{1,\dots,h-1\}}$ which have been determined in the previous steps, and none of the functions $\{Y^j\}_{j\in\{h+1,\dots,n\}}$ which are still to be determined. Moreover, by Lemma \ref{lemma:1blockgammadependence}, the system depends on $u^1,\dots,u^h$ only. The compatibility of such a system follows as a consequence of \eqref{rc-intri}, as pointed out in Remark \ref{RemarkCompatVS3RC} with the aid of Remark \ref{Remark_GammaDependenceMakesCompatChainClosed1block}. By Darboux's theorem, $Y^h(u^1,\dots,u^h)$ is uniquely determined up to one more arbitrary function of $u^1$, as $\partial_1Y^h$ appears nowhere in the system. This proves that the complete solution $\{Y^i(u^1,\dots,u^i)\}_{i\inw}$ is uniquely determined up to $n$ arbitrary functions of $u^1$.
\end{proof}
\vspace{0.5em}
Thus, from Propositions \ref{prop:1blockcompletemeansgamma}, \ref{prop:1blockgammameanscomplete} together with Lemma \ref{prop:1blockgammapartial} and \ref{Remark_GammaDependenceMakesCompatChainClosed1block}, we have the main result of this subsection.
\begin{theorem}
   In order to apply Darboux's theorem to the system of symmetries, \eqref{redhie}, we must have that $X^i$ in \eqref{Vfor1block} depends on $u^1, \cdots, u^i$ only. Furthermore, if $X^i$ depends only on $u^1, \cdots, u^i$ then the set of symmetries is complete.  
\end{theorem}

\subsection{Arbitrary Jordan block structure}\label{SubsectionLauricella}
In this subsection, we treat the general case where the matrix $V$ defining the system \eqref{SHT-intro} consists of $r$ blocks, its generic $\alpha^{\text{th}}$ block being of the form
\eqref{toeplitz}. We show that a necessary condition for Darboux's Theorem III to be applicable to the system for the symmetries is that, for $\alpha\in\{1,\dots,r\}$ and $i\in\{1,\dots,m_\alpha\}$, $X^{i(\alpha)}$ does not depend on ${\{u^{j(\sigma)}\,|\,j>i\}_{\sigma\in\{1,\dots,r\}}}$. Under this requirement, we prove completeness of the symmetries, i.e. that the general solution to the associated system depends on $n$ functions of a single variable, of the form \eqref{narbryfcts_regular}. 
\newline
\newline
By virtue of Lemma \ref{lemma:9.1}, the system for the symmetries (\ref{redhie}) reads
\begin{align}
	&\partial_{k(\alpha)}Y^{i(\alpha)}=\partial_{1(\alpha)}Y^{(i-k+1)(\alpha)}+\overset{m_\alpha}{\underset{s=2}{\sum}}\Big(\Gamma^{(i-k+1)(\alpha)}_{1(\alpha)s(\alpha)}-\Gamma^{i(\alpha)}_{k(\alpha)s(\alpha)}\Big)Y^{s(\alpha)},\quad k\in\{2,\dots,m_\alpha\},\label{EqSymm_sameblock_updated}\\
	&\partial_{j(\beta)}Y^{i(\alpha)}=-\overset{i}{\underset{s=1}{\sum}}\Gamma^{i(\alpha)}_{j(\beta)s(\alpha)}Y^{s(\alpha)}-\overset{m_\beta-j+1}{\underset{s=1}{\sum}}\Gamma^{i(\alpha)}_{j(\beta)s(\beta)}Y^{s(\beta)},\qquad\alpha\neq\beta,\,\,j\in\{1,\dots,m_\beta\},\label{EqSymm_distinctblocks_updated}
\end{align}
for all $\alpha\in\{1,\dots,r\}$ and $i\in\{1,\dots,m_\alpha\}$. Let us denote
\begin{align*}
	\mathcal{U}^i:=\{u^{j(\sigma)}\,|\,j\leq i\}_{\sigma\in\{1,\dots,r\}}=\{u^{1(\sigma)},\dots,u^{\min\{i,m_\sigma\}(\sigma)}\}_{\sigma\in\{1,\dots,r\}}.
\end{align*}
Thus, the condition for $X^{i(\alpha)}$ to not depend on ${\{u^{j(\sigma)}\,|\,j>i\}_{\sigma\in\{1,\dots,r\}}}$ for any ${\alpha\in\{1,\dots,r\}}$ and ${i\in\{1,\dots,m_\alpha\}}$, translates to the requirement that $X^{i(\alpha)}$ depends only on $\mathcal{U}^i$. Under this assumption, completeness of the symmetries follows as a consequence of rearranging the system \eqref{EqSymm_sameblock_updated}-\eqref{EqSymm_distinctblocks_updated} into appropriate subsystems, to each of which Darboux's theorem applies. We begin our discussion by presenting the two following technical results.
\begin{lemma}\label{Proposition_VanishingChrVsX_sameblock}
	Let $\alpha\in\{1,\dots,r\}$. The requirement
	\begin{equation}\label{VanishingByClosedness_sameblock_equiv}
		\Gamma^{i(\alpha)}_{j(\alpha)k(\alpha)}=0,\qquad j,k\geq2,\,\,i-j-k\leq-3,\,\, i\in\{1,\dots,m_\alpha\},
	\end{equation}
	is equivalent to
	\begin{align}\label{XdependenceByClosedness_sameblock}
		\partial_{k(\alpha)}X^{i(\alpha)}=0,\qquad k>i,\,\,i\in\{1,\dots,m_\alpha\}.
	\end{align}
\end{lemma}
\begin{proof}
	Firstly, we assume \eqref{VanishingByClosedness_sameblock_equiv}.  By replacing $Y$ with $X$ in \eqref{EqSymm_sameblock_updated}, for $k>i$ we get
	\begin{align*}
		\partial_{k(\alpha)}X^{i(\alpha)}&=-\overset{m_\alpha}{\underset{s=2}{\sum}}\Gamma^{i(\alpha)}_{k(\alpha)s(\alpha)}Y^{s(\alpha)}\overset{\eqref{VanishingByClosedness_sameblock_equiv}}{=}0.
	\end{align*}
The proof of the converse direction is identical to the one-block case, i.e. Lemma \ref{prop:1blockgammapartial} so we omit the details here. The attentive reader might however notice that in the proof of Lemma \ref{prop:1blockgammapartial} we use the fact that $\nabla e = 0 \implies \Gamma^i_{1j} = 0$, which only holds in the case of a single Jordan block. Nevertheless, this is used solely to remove the $j,k \geq 2$ requirement, which we keep here.
\end{proof}

\begin{lemma}\label{Proposition_VanishingChrVsX_differentblocks}
	Let $\alpha,\beta\in\{1,\dots,r\}$ with $\beta\neq\alpha$. The requirement
	\begin{equation}\label{VanishingByClosedness_differentblocks_equiv2}
		\Gamma^{i(\alpha)}_{1(\alpha)s(\beta)}=0,\qquad s>i,\,\,i\in\{1,\dots,m_\alpha\},
	\end{equation}
	is equivalent to
	\begin{align}\label{XdependenceByClosedness_diffblock}
		\partial_{j(\beta)}X^{i(\alpha)}=0,\qquad j>i,\,\,i\in\{1,\dots,m_\alpha\}.
	\end{align}
\end{lemma}
\begin{proof}
	Let us first assume \eqref{VanishingByClosedness_differentblocks_equiv2}. By choosing $k=1$ in \eqref{Chr_b}, we have
	\begin{align}
		&\Gamma^{i(\alpha)}_{j(\beta)1(\alpha)}\overset{\eqref{lemma91a}}{=}-\frac{1}{X^{1(\alpha)}-X^{1(\beta)}}\bigg(\partial_{j(\beta)}X^{i(\alpha)}+\overset{i}{\underset{s=2}{\sum}}\Gamma^{(i-s+1)(\alpha)}_{j(\beta)1(\alpha)}X^{s(\alpha)}\label{Lemma3.11_k=1}\\
		&\qquad\qquad\quad\,\,\,-\overset{m_\beta}{\underset{s=j+1}{\sum}}\Gamma^{i(\alpha)}_{1(\alpha)s(\beta)}X^{(s-j+1)(\beta)}\bigg),\notag
	\end{align}
	yielding $\partial_{j(\beta)}X^{i(\alpha)}=0$ when $j>i$, since $i-s+1 \leq i-1 <j$ for $s \geq 2$.
	\newline
    \newline
	Let us now assume \eqref{XdependenceByClosedness_diffblock}. Without loss of generality we may let $m_\beta>1$. Condition \eqref{Lemma3.11_k=1} for $i<j=m_\beta$ reads
	\begin{equation}
		\Gamma^{i(\alpha)}_{m_\beta(\beta)1(\alpha)}=-\frac{1}{X^{1(\alpha)}-X^{1(\beta)}}\overset{i}{\underset{s=2}{\sum}}\Gamma^{(i-s+1)(\alpha)}_{m_\beta(\beta)1(\alpha)}X^{s(\alpha)},
        \label{eq:gammaindstep}
	\end{equation}
	which yields $\Gamma^{1(\alpha)}_{m_\beta(\beta)1(\alpha)}=0$ for $i=1$. Thus it is clear from \eqref{eq:gammaindstep} that assuming the statement for any $h < i$ proves that $\Gamma^{i(\alpha)}_{m_{\beta}(\beta)1(\alpha)}=0$ for a generic $i\in\{2,\dots,m_{\beta}-1\}$, by the method of induction. Let us fix now ${j\in\{2,\dots,m_\beta-1\}}$, and inductively assume that $\Gamma^{i(\alpha)}_{h(\beta)1(\alpha)}=0$ for $i<h$, for each $h\in\{j+1,\dots,m_\beta\}$. Then, for $i<j$, condition \eqref{Lemma3.11_k=1} reads
	\begin{equation}
		\Gamma^{i(\alpha)}_{j(\beta)1(\alpha)}=-\frac{1}{X^{1(\alpha)}-X^{1(\beta)}}\overset{i}{\underset{s=2}{\sum}}\Gamma^{(i-s+1)(\alpha)}_{j(\beta)1(\alpha)}X^{s(\alpha)}.
        \label{eq:gammaindstep2}
	\end{equation}
	Performing strong induction over $i$ with base $i=1$ on \eqref{eq:gammaindstep2} concludes the proof.
\end{proof}

\begin{remark}
 Let $X^{s(\rho)}$ depend on $\mathcal{U}^s$ only, for all $\rho \in \{1, \cdots, r\}$ and ${s \in \{1, \cdots, m_{\rho}\}}$. Then 
\begin{equation*}
\Gamma^{i(\alpha)}_{j(\beta) k(\gamma)} = 0, 
\end{equation*}
for any $\alpha, \beta, \gamma \in \{1, \cdots, r\}$, $i \in \{1, \cdots, \max(j,k)-1\}$, $j \in \{1, \cdots, m_{\beta}\}$, ${k \in \{1, \cdots, m_{\gamma}\}}$. This follows from Lemmas \ref{lemma:9.2}, \ref{lemma:9.3},   \ref{Proposition_VanishingChrVsX_sameblock}, and \ref{Proposition_VanishingChrVsX_differentblocks}  together with \eqref{Chr_a}, \eqref{Chr_nablae_2}, and \eqref{lemma91a}.
\label{Rmk:vanishingGammas}
\end{remark}

\begin{lemma}\label{GammaDependence_fromX}
	If, for all $\alpha\in\{1,\dots,r\}$ and $i\in\{1,\dots,m_\alpha\}$, $X^{i(\alpha)}$ depends on $\mathcal{U}^i$ only, then $\Gamma^{i(\alpha)}_{j(\beta)k(\gamma)}$ depends only on $\mathcal{U}^i$ as well, for any $\alpha, \beta, \gamma \in \{1, \cdots, r\}$, $i \in \{1, \cdots, m_{\alpha}\}$, $j \in \{1, \cdots, m_{\beta}\}$, and $k \in \{1, \cdots, m_{\gamma}\}$.
\end{lemma}
\begin{proof}
	The lemma follows from Remark \ref{ChristoffelExplicit}, and we give a sketch-proof in the following. Let us assume that, for all $\alpha\in\{1,\dots,r\}$ and $i\in\{1,\dots,m_\alpha\}$, $X^{i(\alpha)}$ depends on $\mathcal{U}^i$ only. In other words, \eqref{XdependenceByClosedness_sameblock} and \eqref{XdependenceByClosedness_diffblock} hold for all $\alpha\in\{1,\dots,r\}$ and $i\in\{1,\dots,m_\alpha\}$. By \eqref{Chr_a}, the non-trivial Christoffel symbols to be considered are $\{\Gamma^{i(\alpha)}_{j(\beta)k(\gamma)}\}_{\alpha=\beta=\gamma\,\vee\,\gamma=\alpha\neq\beta\,\vee\,\gamma=\beta\neq\alpha}$.
    \newline
    \newline
	Let us first set $\alpha=\beta=\gamma$. Without loss of generality, by Lemma \ref{Proposition_VanishingChrVsX_sameblock}, we consider $i\geq j+k-2$. Moreover, we may assume that $j,k\geq 2$ since the symbols where $j=1$ or $k=1$ are included in the consideration $\gamma = \alpha \neq \beta$ due to \eqref{Chr_nablae_2}. By \eqref{Chr_d} and Lemma \ref{Proposition_VanishingChrVsX_sameblock}, we have
	\begin{align*}
		&\Gamma^{i(\alpha)}_{k(\alpha)j(\alpha)}=\frac{1}{X^{2(\alpha)}}\bigg(\partial_{(j-1)(\alpha)}X^{(i-k+1)(\alpha)}+\overset{i-j+3}{\underset{s=k+1}{\sum}}\Gamma^{i(\alpha)}_{(j-1)(\alpha)s(\alpha)}X^{(s-k+1)(\alpha)}\\&\quad\quad\quad\,\,\,\,-\partial_{k(\alpha)}X^{(i-j+2)(\alpha)}-\overset{i-k+2}{\underset{s=j+1}{\sum}}\Gamma^{i(\alpha)}_{k(\alpha)s(\alpha)}X^{(s-j+2)(\alpha)}\bigg),
	\end{align*}
	where the first and third terms depend on $\mathcal{U}^i$ only, since $i-k+1\leq i$ and $i-j+2\leq i$. As for the second and fourth terms, we have that $\{X^{(s-k+1)(\alpha)}\}_{s\in\{k+1,\dots,i-j+3\}}$ and $\{X^{(s-j+2)(\alpha)}\}_{s\in\{j+1,\dots,i-k+2\}}$ depend on $\mathcal{U}^i$ only, as $s-k+1\leq i-j-k+4\leq i$ and $s-j+2 \leq i-j-k+4 \leq i$ since $j,k \geq 2$. Hence, the property of  $\Gamma^{i(\alpha)}_{k(\alpha)j(\alpha)}$ to depend solely on $\mathcal{U}^i$ reduces to the analogous property of $\{\Gamma^{i(\alpha)}_{(j-1)(\alpha)s(\alpha)}\}_{s\in\{k+1,\dots,i-j+3\}}$ and $\{\Gamma^{i(\alpha)}_{k(\alpha)s(\alpha)}\}_{s\in\{j+1,\dots,i-k+2\}}$, and thus follows from a double inductive argument over $j$ and $k$.
	\newline
    \newline
	Let us now set $\gamma=\alpha\neq\beta$. Without loss of generality, by \eqref{lemma91a}, we consider $i\geq k$. By \eqref{Chr_b}, we have
	\begin{align}\label{eq:gammatemp2}
		\Gamma^{i(\alpha)}_{j(\beta)k(\alpha)}\overset{\eqref{lemma91a}}{\underset{\eqref{VanishingByClosedness_differentblocks_equiv2}}{=}} \, & \, -\frac{1}{X^{1(\alpha)}-X^{1(\beta)}}\Bigg(\partial_{j(\beta)}X^{(i-k+1)(\alpha)}+\overset{i}{\underset{s=k+1}{\sum}}\Gamma^{(i-s+1)(\alpha)}_{j(\beta)1(\alpha)}X^{(s-k+1)(\alpha)}\\ \, & \, -\overset{i-k+1}{\underset{s=j+1}{\sum}}\Gamma^{(i-k+1)(\alpha)}_{1(\alpha)s(\beta)}X^{(s-j+1)(\beta)}\Bigg).\notag
	\end{align}
	The proof that all of the quantities appearing on the right-hand side of \eqref{eq:gammatemp2} depend on $\mathcal{U}^i$ only develops analogously to the previous case.
	\newline
    \newline
	Let us finally set $\gamma=\beta\neq\alpha$. By Lemma \ref{lemma91b}, when $\Gamma^{i(\alpha)}_{k(\beta)j(\beta)}\neq0$, we have
	\begin{align*}
		\Gamma^{i(\alpha)}_{k(\beta)j(\beta)}&=\Gamma^{i(\alpha)}_{1(\beta)(j+k-1)(\beta)}=-\Gamma^{i(\alpha)}_{1(\alpha)(j+k-1)(\beta)},
	\end{align*}
	whose dependence on $\mathcal{U}^i$ only follows from the previous part of the proof.
\end{proof}

\begin{remark}\label{Remark_GammaDependenceMakesCompatChainClosedMultiblock}
	Under the assumptions \eqref{XdependenceByClosedness_sameblock} and \eqref{XdependenceByClosedness_diffblock}, the formulas for compatibility  \eqref{Compat_temp_1}, \eqref{Compat_temp_2} depend on $\mathcal{U}^i$ only when considering $Y^{i(\alpha)}$. This follows from Lemmas \ref{Proposition_VanishingChrVsX_sameblock}, \ref{Proposition_VanishingChrVsX_differentblocks} and \ref{GammaDependence_fromX}, and from the observation that $R^{i(\alpha)}_{s(\sigma) j(\beta) k(\gamma)}$ vanishes when $s>i$, while when non-vanishing it depends on $\mathcal{U}^{i}$ only.
\end{remark}
\textcolor{white}{...}
\vspace{-2.4em}
\newline
\newline
For all $m\in\{1,\dots,\underset{{\sigma\in\{1,\dots,r\}}}{\max}m_\sigma\}$,
let us denote $\mathcal{A}_m:=\{\alpha\in\{1,\dots,r\}\,|\,m_\alpha\geq m\}$.
\begin{proposition}
	In order for the assumptions of Darboux's theorem to be valid for some arrangement of the system \eqref{EqSymm_sameblock_updated}-\eqref{EqSymm_distinctblocks_updated} into suitable subsystems, for all $\alpha\in\{1,\dots,r\}$ and $i\in\{1,\dots,m_\alpha\}$, $X^{i(\alpha)}$ must depend on $\mathcal{U}^i$ only.
\end{proposition}
\begin{proof}
	In order to apply Darboux's theorem to some subsystem of \eqref{EqSymm_sameblock_updated}-\eqref{EqSymm_distinctblocks_updated}, such subsystem must not include terms of the form $\partial_{1(\alpha)}Y^{s(\alpha)}$ for unknown $Y^{s(\alpha)}$. Since such terms appear in the system as $\partial_{1(\alpha)}Y^{(i-k+1)(\alpha)}$ for $k\geq2$, one must start from considering the subsystem for $\{Y^{1(\alpha)}\}_{\alpha\in\{1,\dots,r\}}$, as it is the broadest subsystem where these terms do not appear for any $k\geq2$. It reads
	\begin{align}
		&\partial_{k(\alpha)}Y^{1(\alpha)}=-\overset{m_\alpha}{\underset{s=2}{\sum}}\Gamma^{1(\alpha)}_{k(\alpha)s(\alpha)}Y^{s(\alpha)},\quad k\in\{2,\dots,m_\alpha\},\label{EqSymm_sameblock_updated_1bis}\\
		&\partial_{j(\beta)}Y^{1(\alpha)}=-\Gamma^{1(\alpha)}_{j(\beta)1(\alpha)}(Y^{1(\alpha)}-Y^{1(\beta)})-\overset{m_\beta-j+1}{\underset{s=2}{\sum}}\Gamma^{1(\alpha)}_{j(\beta)s(\beta)}Y^{s(\beta)},\quad\alpha\neq\beta,\,\,j\in\{1,\dots,m_\beta\},\label{EqSymm_distinctblocks_updated_1bis}
	\end{align}
	for all $\alpha\in\{1,\dots,r\}$. In order for the system \eqref{EqSymm_sameblock_updated_1bis}-\eqref{EqSymm_distinctblocks_updated_1bis} to be closed, it must not include any of the elements $\{Y^{s(\alpha)}\,|\,s>1\}_{\alpha\in\{1,\dots,r\}}$, which means we require
	\begin{align*}
		&\overset{m_\alpha}{\underset{s=2}{\sum}}\Gamma^{1(\alpha)}_{k(\alpha)s(\alpha)}Y^{s(\alpha)}=0,\qquad k\geq2,
	\end{align*}
	and
	\begin{align*}
		 &\overset{m_\beta-j+1}{\underset{s=2}{\sum}}\Gamma^{1(\alpha)}_{j(\beta)s(\beta)}Y^{s(\beta)}=0,\qquad j\geq1,\,\,\beta\neq\alpha.
	\end{align*}
	By linear independence of the elements of $\{Y_A\}_{A\inw}$, this implies
	\begin{align*}
		\Gamma^{1(\alpha)}_{k(\alpha)s(\alpha)}=0,\qquad s>1,\,\,k\geq2,
	\end{align*}
	and
	\begin{align*}
		\Gamma^{1(\alpha)}_{j(\beta)s(\beta)}=0,\qquad s>1,\,\,j\geq1,\,\,\beta\neq\alpha.
	\end{align*}
	In this way, one may determine $\{Y^{1(\alpha)}\}_{\alpha\in\{1,\dots,r\}}$. Then, the broadest system to which Darboux's theorem may be applied is the one for $\{Y^{2(\alpha)}\}_{\alpha\in\mathcal{A}_2}$, as the only terms of the form $\partial_{1(\alpha)}Y^{s(\alpha)}$ are the ones for $s=1$, which are known. The same holds at a generic step. More precisely, for $i\in\big\{2,\dots,\underset{{\sigma\in\{1,\dots,r\}}}{\max}m_\sigma\big\}$, the subsystem for $\{Y^{i(\alpha)}\}_{\alpha\in\mathcal{A}_i}$ is such that the only terms of the form $\partial_{1(\alpha)}Y^{s(\alpha)}$ are the ones for ${s\leq i-1}$, which are retrieved in the previous steps. Such system reads \eqref{EqSymm_sameblock_updated}-\eqref{EqSymm_distinctblocks_updated}. In order for it not to include any of the terms $\{Y^{s(\alpha)}\,|\,s>i\}_{\alpha\in\{1,\dots,r\}}$, one must require that
	\begin{align*}
		&\overset{m_\alpha}{\underset{s=i+1}{\sum}}\Big(\Gamma^{(i-k+1)(\alpha)}_{1(\alpha)s(\alpha)}-\Gamma^{i(\alpha)}_{k(\alpha)s(\alpha)}\Big)Y^{s(\alpha)}=0,\qquad k\geq2,
	\end{align*}
	and
	\begin{align*}
		&\overset{m_\beta-j+1}{\underset{s=i+1}{\sum}}\Gamma^{i(\alpha)}_{j(\beta)s(\beta)}Y^{s(\beta)}=0,\qquad j\geq1,\,\,\beta\neq\alpha,
	\end{align*}
	which implies
	\begin{align}\label{CompletenessNecess_i_1}
		\Gamma^{(i-k+1)(\alpha)}_{1(\alpha)s(\alpha)}-\Gamma^{i(\alpha)}_{k(\alpha)s(\alpha)}=0,\qquad s>i,\,\,k\geq2,
	\end{align}
	and
	\begin{align}\label{CompletenessNecess_i_2}
		\Gamma^{i(\alpha)}_{j(\beta)s(\beta)}=0,\qquad s>i,\,\,j\geq1,\,\,\beta\neq\alpha.
	\end{align}
	by linear independence of the elements of $\{Y_A\}_{A\inw}$. In particular, \eqref{CompletenessNecess_i_1} amounts to
	\begin{align}\label{CompletenessNecess_i_1_bis}
		\Gamma^{i(\alpha)}_{k(\alpha)s(\alpha)}=0,\qquad s>i,\,\,k\geq2,
	\end{align}
	as for $s>i$ we have $\Gamma^{(i-k+1)(\alpha)}_{1(\alpha)s(\alpha)}=-\overset{}{\underset{\sigma\neq\alpha}{\sum}}\Gamma^{(i-k+1)(\alpha)}_{1(\sigma)s(\alpha)}=0$ by Lemma \ref{lemma:9.1} as ${s>i>i-k+1}$. By Lemma \ref{lemma:9.2}, \eqref{CompletenessNecess_i_1_bis} amounts to \eqref{VanishingByClosedness_sameblock_equiv}, which is equivalent to \eqref{XdependenceByClosedness_sameblock} by Lemma \ref{Proposition_VanishingChrVsX_sameblock}. By Lemma \ref{lemma:9.3}, \eqref{CompletenessNecess_i_2} amounts to \eqref{VanishingByClosedness_differentblocks_equiv2}, which is equivalent to \eqref{XdependenceByClosedness_diffblock} by Lemma \ref{Proposition_VanishingChrVsX_differentblocks}. By (\ref{XdependenceByClosedness_sameblock}, \ref{XdependenceByClosedness_diffblock}), for all $\alpha\in\{1,\dots,r\}$ and $i\in\{1,\dots,m_\alpha\}$, $X^{i(\alpha)}$ depends on $\mathcal{U}^i$ only.
\end{proof}

\begin{proposition}
	If, for all $\alpha\in\{1,\dots,r\}$ and $i\in\{1,\dots,m_\alpha\}$, the component $X^{i(\alpha)}$ depends on the coordinates $\mathcal{U}^i$ only, then the system \eqref{EqSymm_sameblock_updated}-\eqref{EqSymm_distinctblocks_updated} is complete.
\end{proposition}
\begin{proof}
	Let us assume that, for all $\alpha\in\{1,\dots,r\}$ and $i\in\{1,\dots,m_\alpha\}$, the component $X^{i(\alpha)}$ depends on the coordinates $\mathcal{U}^i$ only, that is
	\begin{align*}
		\partial_{j(\sigma)}X^{i(\alpha)}=0,\qquad j>i,\,\,\sigma\in\{1,\dots,r\}.
	\end{align*}
	By Remark \ref{Rmk:vanishingGammas} and Lemma \ref{lemma:9.2}, the system \eqref{EqSymm_sameblock_updated}-\eqref{EqSymm_distinctblocks_updated} becomes
	\begin{align}
		&\partial_{k(\alpha)}Y^{i(\alpha)}=\partial_{1(\alpha)}Y^{(i-k+1)(\alpha)}+\overset{i-k+2}{\underset{s=2}{\sum}}\Big(\Gamma^{(i-k+1)(\alpha)}_{1(\alpha)s(\alpha)}-\Gamma^{i(\alpha)}_{k(\alpha)s(\alpha)}\Big)Y^{s(\alpha)},\quad k\in\{2,\dots,m_\alpha\},\label{EqSymm_sameblock_updated_underXhp}\\
		&\partial_{j(\beta)}Y^{i(\alpha)}=-\overset{i}{\underset{s=1}{\sum}}\Gamma^{i(\alpha)}_{j(\beta)s(\alpha)}Y^{s(\alpha)}-\overset{i}{\underset{s=1}{\sum}}\Gamma^{i(\alpha)}_{j(\beta)s(\beta)}Y^{s(\beta)},\qquad\alpha\neq\beta,\,\,j\in\{1,\dots,m_\beta\},\label{EqSymm_distinctblocks_updated_underXhp}
	\end{align}
	for all $\alpha\in\{1,\dots,r\}$ and $i\in\{1,\dots,m_\alpha\}$. In particular, \eqref{EqSymm_sameblock_updated_underXhp} trivially vanishes for $k>i$ and, by Remark \ref{Rmk:vanishingGammas}, \eqref{EqSymm_distinctblocks_updated_underXhp} vanishes for $j>i$. Then, a solution $Y$ to \eqref{EqSymm_sameblock_updated_underXhp}-\eqref{EqSymm_distinctblocks_updated_underXhp} must be such that for all $\alpha\in\{1,\dots,r\}$ and $i\in\{1,\dots,m_\alpha\}$, $Y^{i(\alpha)}$ depends on $\mathcal{U}^i$ only. The system \eqref{EqSymm_sameblock_updated_underXhp}-\eqref{EqSymm_distinctblocks_updated_underXhp} reads
	\begin{align}
		&\partial_{k(\alpha)}Y^{i(\alpha)}=\partial_{1(\alpha)}Y^{(i-k+1)(\alpha)}+\overset{i-k+2}{\underset{s=2}{\sum}}\Big(\Gamma^{(i-k+1)(\alpha)}_{1(\alpha)s(\alpha)}-\Gamma^{i(\alpha)}_{k(\alpha)s(\alpha)}\Big)Y^{s(\alpha)},\quad k\in\{2,\dots,i\},\label{EqSymm_sameblock_updated_underXhp_i}\\
		&\partial_{j(\beta)}Y^{i(\alpha)}=-\overset{i}{\underset{s=1}{\sum}}\Gamma^{i(\alpha)}_{j(\beta)s(\alpha)}Y^{s(\alpha)}-\overset{i}{\underset{s=1}{\sum}}\Gamma^{i(\alpha)}_{j(\beta)s(\beta)}Y^{s(\beta)},\qquad\alpha\neq\beta,\,\,j\in\{1,\dots,i\},\label{EqSymm_distinctblocks_updated_underXhp_i}
	\end{align}
	for all $\alpha\in\{1,\dots,r\}$ and $i\in\{1,\dots,m_\alpha\}$. In particular, for each $\alpha\in\{1,\dots,r\}$, $Y^{1(\alpha)}$ only depends on $u^{1(1)},\dots,u^{1(r)}$, and, by Lemma \ref{GammaDependence_fromX}, so does the subsystem
	\begin{align}
		&\partial_{1(\beta)}Y^{1(\alpha)}=-\Gamma^{1(\alpha)}_{1(\beta)1(\alpha)}(Y^{1(\alpha)}-Y^{1(\beta)}),\qquad\beta\neq\alpha,\label{EqSymm_distinctblocks_updated_underXhp_1}
	\end{align}
	to which \eqref{EqSymm_sameblock_updated_underXhp}-\eqref{EqSymm_distinctblocks_updated_underXhp} reduces when letting $i=1$. As Darboux's theorem applies to such a subsystem, whose compatibility follows as a consequence of \eqref{rc-intri}, as pointed out in Remark \ref{RemarkCompatVS3RC} with the aid of Remark \ref{Remark_GammaDependenceMakesCompatChainClosedMultiblock}, its general solution depends on $r$ arbitrary functions of a single variable
	\begin{equation*}
		\big\{f_{1(\alpha)}(u^{1(\alpha)})\big\}_{\alpha\in\{1,\dots,r\}}.
	\end{equation*}
	Once in possession of $\{Y^{1(\alpha)}\}_{\alpha\in\{1,\dots,r\}}$, let us consider the subsystem for $\{Y^{2(\alpha)}\}_{\alpha\in\mathcal{A}_2}$:
	\begin{align}
		&\partial_{2(\alpha)}Y^{2(\alpha)}=\partial_{1(\alpha)}Y^{1(\alpha)}+\Big(\Gamma^{1(\alpha)}_{1(\alpha)2(\alpha)}-\Gamma^{2(\alpha)}_{2(\alpha)2(\alpha)}\Big)Y^{2(\alpha)},\label{EqSymm_sameblock_updated_underXhp_2}\\
		&\partial_{j(\beta)}Y^{2(\alpha)}=-\overset{2}{\underset{s=1}{\sum}}\Gamma^{2(\alpha)}_{j(\beta)s(\alpha)}Y^{s(\alpha)}-\overset{2}{\underset{s=1}{\sum}}\Gamma^{2(\alpha)}_{j(\beta)s(\beta)}Y^{s(\beta)},\qquad\alpha\neq\beta,\,\,j\in\{1,2\},\label{EqSymm_distinctblocks_updated_underXhp_2}
	\end{align}
	for all $\alpha\in\{1,\dots,r\}$. By Lemma \ref{GammaDependence_fromX}, it only depends on $\mathcal{U}^2$. As $\partial_{1(\alpha)}Y^{1(\alpha)}$ is known, Darboux's theorem applies to this subsystem, whose compatibility follows as a consequence of \eqref{rc-intri}. Then its general solution depends on $|\mathcal{A}_2|$ arbitrary functions of a single variable
	\begin{equation*}
		\big\{f_{2(\alpha)}(u^{1(\alpha)})\big\}_{\alpha\in\mathcal{A}_2}.
	\end{equation*}
	Let us assume that, for a fixed $i\in\{2,\dots,\underset{{\sigma\in\{1,\dots,r\}}}{\max}m_\sigma\}$, for each $h\leq i-1$ the generic solution of the subsystem
	\begin{align}
		&\partial_{k(\alpha)}Y^{h(\alpha)}=\partial_{1(\alpha)}Y^{(h-k+1)(\alpha)}+\overset{h-k+2}{\underset{s=2}{\sum}}\Big(\Gamma^{(h-k+1)(\alpha)}_{1(\alpha)s(\alpha)}-\Gamma^{h(\alpha)}_{k(\alpha)s(\alpha)}\Big)Y^{s(\alpha)},\quad k\in\{2,\dots,h\},\label{EqSymm_sameblock_updated_underXhp_h}\\
		&\partial_{j(\beta)}Y^{h(\alpha)}=-\overset{h}{\underset{s=1}{\sum}}\Gamma^{h(\alpha)}_{j(\beta)s(\alpha)}Y^{s(\alpha)}-\overset{h}{\underset{s=1}{\sum}}\Gamma^{h(\alpha)}_{j(\beta)s(\beta)}Y^{s(\beta)},\qquad\alpha\neq\beta,\,\,j\in\{1,\dots,h\},\label{EqSymm_distinctblocks_updated_underXhp_h}
	\end{align}
	for all $\alpha\in\{1,\dots,r\}$, depends on $|\mathcal{A}_h|$ arbitrary functions of a single variable
	\begin{equation*}
		\big\{f_{h(\alpha)}(u^{1(\alpha)})\big\}_{\alpha\in\mathcal{A}_h}.
	\end{equation*}
	Once in possession of $\big\{\{Y^{h(\alpha)}\}_{\alpha\in\mathcal{A}_h}\big\}_{h\in\{1,\dots,i-1\}}$, by Lemma \ref{GammaDependence_fromX}, the subsystem \eqref{EqSymm_sameblock_updated_underXhp_i}-\eqref{EqSymm_distinctblocks_updated_underXhp_i} for $\{Y^{i(\alpha)}\}_{\alpha\in\mathcal{A}_i}$ depends on $\mathcal{U}^i$ only. As $\partial_{1(\alpha)}Y^{(i-k+1)(\alpha)}$ is known for $k\in\{2,\dots,i\}$, Darboux's theorem applies to such a subsystem, whose compatibility follows as a consequence of \eqref{rc-intri}, guaranteeing that its general solution depends on $|\mathcal{A}_i|$ arbitrary functions of a single variable
	\begin{equation*}
		\big\{f_{i(\alpha)}(u^{1(\alpha)})\big\}_{\alpha\in\mathcal{A}_i}.
	\end{equation*}
	Therefore, in total, the general solution to the system \eqref{EqSymm_sameblock_updated_underXhp}-\eqref{EqSymm_distinctblocks_updated_underXhp} depends on
	\begin{align*}
		\overset{\underset{{\sigma\in\{1,\dots,r\}}}{\max}m_\sigma}{\underset{i=1}{\sum}}|\mathcal{A}_i|=\overset{r}{\underset{\sigma=1}{\sum}}m_\sigma=n
	\end{align*}
	arbitrary functions of a single variable
	\begin{equation*}
		\big\{\{f_{i(\alpha)}(u^{1(\alpha)})\}_{\alpha\in\mathcal{A}_i}\big\}_{i\in\big\{1,\dots,\underset{{\sigma\in\{1,\dots,r\}}}{\max}m_\sigma\big\}}=\big\{\{f_{i(\alpha)}(u^{1(\alpha)})\}_{i\in\{1,\dots,m_\alpha\}}\big\}_{\alpha\in\{1,\dots,r\}}.
	\end{equation*}
\end{proof}
We have proved the following, which is the second main result of the paper.
\begin{theorem}\label{Theorem_Completeness_multiblock}
	In order to apply Darboux's theorem to the system for the symmetries \eqref{EqSymm_sameblock_updated}-\eqref{EqSymm_distinctblocks_updated}, we must have that, for all $\alpha\in\{1,\dots,r\}$ and ${i\in\{1,\dots,m_\alpha\}}$, $X^{i(\alpha)}$ depends on $\mathcal{U}^i$ only. Furthermore, if, for all $\alpha\in\{1,\dots,r\}$ and ${i\in\{1,\dots,m_\alpha\}}$, $X^{i(\alpha)}$ depends on $\mathcal{U}^i$ only, then the set of symmetries for the system \eqref{EqSymm_sameblock_updated}-\eqref{EqSymm_distinctblocks_updated} is complete.
\end{theorem}

\begin{corollary}
Let $V$, determining the system \eqref{SHT-intro}, be of the form $V=X\circ$ where $\circ$ is given by  the structure constants \eqref{cijkmulti} and  $X^{i(\alpha)}$ depends on $\mathcal{U}^i$ only, for all $\alpha \in \{1, \cdots, r\}$ and $i \in \{1, \cdots, m_{\alpha}\}$. Then, in a neighbourhood of a point $P=(x_0,t_0)$ where
\begin{equation}\label{trans_cond}
u^{1(\alpha)}_x(P)\ne  0,\qquad \forall\alpha\in\{1,\dots,r\},
\end{equation}
any solution to the system \eqref{SHT-intro} can be obtained by the generalised hodograph method.
\end{corollary}
\begin{proof} \emph{Mutatis mutandis}, the proof is analogous to the diagonal case (Theorem 10  in \cite{ts91}), due to the fact that the system of symmetries satisfies the criteria for the use of Darboux's theorem by Theorem \ref{Theorem_Completeness_multiblock} and that the block-diagonal matrix $M$, defined by \eqref{TsMatrix}, is invertible. Moreover, like in Tsarev's case and precisely in the same way, Darboux's Theorem III needs to  be slightly extended in  order to guarantee the uniqueness of the solution to the system of symmetries specified by its value the curve $u^i(x,t_0)=u^i_0(x)$ satisfying pointwise the transversality condition \eqref{trans_cond}.
\end{proof}  

\subsection{Examples}
The family of F-manifolds with compatible connection and flat unit studied in \cite{LP23} does indeed satisfy the Darboux's  condition on $V$. Such F-manifolds are obtained by considering $V = X \circ$ with ${X=E-a_0\,e}$, where $a_0=\overset{r}{\underset{\alpha=1}{\sum}}\,\epsilon_\alpha\,u^{1(\alpha)}$, and go by the name of Lauricella bi-flat F-manifolds. For instance, in the case of a  single Jordan block of size $3$
 the equation $\text{d}_{\nabla}(Y \circ) = 0$ gives
   \begin{subequations}
       \begin{equation}
         \partial_3 Y^1 = \partial_3 Y^2 = \partial_2 Y^1 = 0, \quad \partial_2 Y^2 = \partial_1 Y^1 + \frac{\epsilon_1}{u^2}Y^2,
       \end{equation}
       \begin{equation}
           \partial_2 Y^3 = \partial_1 Y^2 + \frac{\epsilon_1}{u^2}Y^3 - \frac{\epsilon_1}{(u^2)^2}Y^2, \quad \partial_3 Y^3 = \partial_1 Y^1 + \frac{\epsilon_1}{u^2}Y^2.
       \end{equation}
       \label{eq:exlau3d}
   \end{subequations}
Let us apply the procedure explained in Proposition \ref{prop:1blockgammameanscomplete}. We start from the closed subsystem  for $Y^1$:
\[\partial_2 Y^1 = \partial_3 Y^1 = 0,\]
whose general solution is $Y^1 = f_1(u^1)$. Using this information we obtain a closed
 subsystem for $Y^2$:
 \begin{equation*}
 \partial_3 Y^2 = 0, \quad \partial_2 Y^2 = \partial_1 f_1 + \frac{\epsilon_1}{u^2}Y^2,
       \end{equation*}
 whose general solution is given by
   \begin{equation*}
       Y^1 = f_1(u^1), \qquad Y^2 = \dfrac{u^2 f_1'(u^1)}{1-\epsilon_1} + (u^2)^{\epsilon_1}f_2(u^1), 
   \end{equation*}
where $f_2$ is a new arbitrary function of $u^1$. 
   Substituting $Y^1$ and $Y^2$ in the remaining equations for $Y^3$ we finally get
   \begin{equation*}
   \begin{split}
       Y^3 =  \frac{1}{(1-\epsilon_1)(2- \epsilon_1)} \big(\, & \,(u^2)^2 f_1''(u^1)  + (\epsilon_1-2)\big((u^2)^{\epsilon_1+1}(\epsilon_1-1)f_2'(u^1)- u^3f_1'(u^1) \\ \, & \, + (u^2)^{\epsilon_1-1}(\epsilon_1-1)(u^3f_2(u^1)\epsilon_1 + f_3(u^1)u^2 )\big) \, \big),
       \end{split}
   \end{equation*}
   where $f_1, f_2$, and $f_3$ are arbitrary functions depending only on the main variable $u^1$. Moreover, by Darboux's theorem III, this is the general solution of \eqref{eq:exlau3d}.

As shown in \cite{LPVG}, for general linear functions  $a_0=\overset{n}{\underset{i=1}{\sum}}\,\epsilon_i\,u^{i}$ the construction of  \cite{LP23} still works. However, Darboux's  condition  is no  longer satisfied. For instance, in the case of a  single Jordan block of size $3$, with the  choice $a_0=\epsilon_3u^3$,  the equation $\text{d}_{\nabla}(Y \circ) = 0$ reduces to
\begin{subequations}
\begin{equation}
   \partial_1 Y^1 = \partial_3 Y^3, \quad \partial_1 Y^2 = \partial_2 Y^3, \quad  \partial_2 Y^1 = \partial_3 Y^2 = \dfrac{\epsilon_3}{u^2}\left(\frac{u^3}{u^2}Y^2 - Y^3 \right),
   \label{eq:ex3u31}
   \end{equation}
   \begin{equation}
       \partial_3 Y^1 = -\dfrac{\epsilon_3}{u^2}Y^2, \quad \partial_2 Y^2 = \partial_1 Y^1 - \dfrac{\epsilon_3(u^3)^2}{(u^2)^3}Y^2 + \dfrac{\epsilon_3 u^3}{(u^2)^2} Y^3.
       \label{eq:ex3u32}
   \end{equation}
   \label{eq:ex3u3}
\end{subequations}
Consider the one-form defined by $\alpha \coloneqq \sum_{i=1}^3\alpha_i \text{d}u^i$, where $\alpha_1 = Y^3$, $\alpha_2 = Y^2$, and $\alpha_3 = Y^1$. By \eqref{eq:ex3u31} $\alpha$ is closed and thus there exists locally a function $U(u^1, u^2, u^3)$ such that $\alpha_i = \partial_i U$ for $i = 1, 2, 3$. Rewriting the system \eqref{eq:ex3u3} in terms of $U$ gives
\begin{equation*}
    \partial_3^2 U + \dfrac{\epsilon_3}{u^2}\partial_2 U = 0 = u^2 \partial_2^2 U -  u^2 \partial_1 \partial_3 U + u^3\partial_2\partial_3U,
\end{equation*}
where we substituted the expression for $\partial_2 U$ arising from the third equation of \eqref{eq:ex3u31} into the second of \eqref{eq:ex3u32}. Special  solutions of this system may, for instance, be obtained using the method of separation of variables. However, comparing with the previous example,  we see
 how the method becomes  much less  efficient in  this case. This example shows that when the  first  order linear system for the symmetries does  not satisfy  the hypothesis  of Darboux's theorem III  some higher order linear PDEs come into play.  Based on this and similar examples, we suspect that to further extend  the results of Section \ref{section:completeness} beyond Tsarev-Darboux's  framework might be extremely difficult.

\section{Conclusions and an open problem}
\textcolor{white}{...}
\vspace{-2.4em}
\newline
\newline
In this paper we have introduced the generalised hodograph method for block-diagonal integrable systems of  hydrodynamic type 
\[u^i_t=V^i_j(u)u^j_x,\qquad i\inw,\]
where  $V=\text{diag}(V_{(1)},...,V_{(r)})$ and any block, labelled by $\alpha \in \{1, \cdots, r\}$, has the lower-triangular Toeplitz form 
\begin{equation*}
	V_{(\alpha)}=
	\begin{bmatrix}
		v^{1(\alpha)} & 0 & \dots & 0\cr
		v^{2(\alpha)} & v^{1(\alpha)} & \dots & 0\cr
		\vdots & \ddots & \ddots & \vdots\cr
		v^{m_\alpha(\alpha)} & \dots & v^{2(\alpha)} & v^{1(\alpha)}
	\end{bmatrix}.
\end{equation*} 
These systems can  be  written in the  form
\beq\label{SYS}
{\bf u}_t=X\circ {\bf u}_x,
\eeq
  where  $\circ$ is the commutative associative product of a regular F-manifold. The relation with the theory of F-manifolds was crucial for our purposes as it allowed us to study the problem by framing it in a geometric setting.  
\newline
\newline  
Thanks to the results of \cite{LPVG}, we knew that  integrable systems of the form \eqref{SYS} define on the associated F-manifold a  geometric structure (a torsionless connection $\nabla$) compatible with the product of the F-manifold  structure. In particular,  compatibility implies condition \eqref{shc-intro}. Moreover, according to  the results of \cite{LPR}, in the  semisimple  case and in canonical coordinates, condition \eqref{shc-intro} reduces to Tsarev's integrability condition (systems satisfying  this condition are sometimes called \emph{semihamiltonian systems}, and sometimes \emph{rich systems} \cite{Serre}). This condition coincides with the compatibility condition (in the sense of Darboux) of the linear system providing the symmetries. Due to Darboux compatibility, the general solution of the system depends on $n$ ($n$ being the number of components of the system) arbitrary functions of a single variable. The same condition ensures the existence of a family of densities of conservation laws depending on the same number  of arbitrary functions. 
\newline
\newline
The generalised hodograph method allows the use of any symmetry  of  the system to define a solution which is  given, in implicit form, as $(x,t)$-depending critical points of the vector field 
  \[x\,e+t\,X-Y\]
  obtained as a linear combination of the  unit vector field and  the  vector fields $X$ and  $Y$ defining  the  system and its symmetry, respectively.
\newline
\newline  
The results of the present paper show that the above integration method carries forth to regular non-semisimple hydrodynamic systems. The main difference between the semisimple and the non-semisimple case is the fact that in the latter the  integrability condition \eqref{shc-intro} is not sufficient to guarantee that  the solutions obtained applying the method provide the general solution  of the system.  The reason being that, in general, the  linear system for the symmetries cannot be reduced to Darboux's form. However, Tsarev's theory  can be fully extended considering the additional restrictions arising from  the request of reducibility to Darboux's form. In particular,  we have shown that  the  possibility of applying Darboux's theorem on compatible systems relies on the dependence of the components of the vector fields defining  the system  on a  subset  of the variables. 
\newline
\newline
In a forthcoming paper we will study the Hamiltonian formalism  for the systems considered  in this  paper. For a general system of hydrodynamic type, the metric defining a Hamiltonian structure for the system satisfies the Dubrovin-Novikov-Tsarev conditions  \cite{DN,ts91}
\begin{eqnarray*}
g_{is}V^s_j&=&g_{js}V^s_i;\\
(d_{\nabla_g}V)^i_{jk}&=&0,
\end{eqnarray*}
where $\nabla_g$ is the Levi-Civita connection of $g$. In Tsarev's  diagonal integrable case, the above system admits a  family of solutions (depending on  functional parameters) thanks to Darboux's  theorem III. Some preliminary results suggest that a similar result holds true also in the  regular non-diagonalisable case.
\newline
\newline
{\bf  Conflict of interest}.  The authors have no conflict of  interest to declare that is relevant to the contents of this article.

\end{document}